\newcommand{\ee}{\mathrm{e}}
\newcommand{\ii}{\mathrm{i}}
\renewcommand{\Im}{\mathrm{Im}}
\newcommand{\Tr}{\mathrm{Tr}}
\newcommand{{\Cd}}{{\mathbb{C}^d}}
\newcommand{{\Rn}}{{\mathbb{R}^n}}
\def\<{\langle}
\def\>{\rangle}
\newtheorem{Theorem}{Theorem}
\newtheorem{cor}{Corollary}
\newtheorem{definition}{Definition}
\newtheorem{pro}{Proposition}
\date{}
\begin{document}

\title{Unidirectional information flow and positive divisibility are nonequivalent notions of quantum Markovianity for noninvertible dynamics}

\author{\'Angel Rivas\\
{\small Departamento de Física Teórica, Facultad de Ciencias Físicas, Universidad Complutense, 28040 Madrid, Spain.}
\\ {\small CCS-Center for Computational Simulation, Campus de Montegancedo UPM, 28660 Boadilla del Monte, Madrid, Spain.}}

\maketitle

\begin{abstract}
We construct a dynamical map which is not positive divisible and does not present information backflow either (as measured by trace norm quantifiers). It is formulated for a qutrit system undergoing noninvertible dynamics. This provides an evidence that the two definitions of quantum Markovianity based on absence of information backflow and positive divisibility are nonequivalent for general noninvertible dynamical maps.
\end{abstract}

\bigskip

\begin{center}
{  \bf Dedicated to the memory of Prof. Andrzej Kossakowski}
\end{center}

\section{Introduction}

The study of memory effects in open quantum systems represents a very active area of research with important implications for quantum science and technology \cite{RHP_rev,Breuer_rev,Ines_rev, Michael_rev,MM}.  However, as a difference with the commutative classical case, the noncommutative algebra of observables associated to a quantum system makes nontrivial the formulation of a Markovian condition. Despite the first rigorous definitions of quantum Markovian systems were suggested in the 1980's \cite{Accardi,Lewis}, the notion was subject to a debate during years and there is still no a unique widely accepted criterion. Two of the most influential definitions are the ones based on absence of information backflow (as measured in terms of trace norm contractivity) \cite{BLP}, on the one hand, and on complete positive (CP) divisibility \cite{RHP}, on the other. More than years ago, Chru\'sci\'nski,  Kossakowski and Rivas studied the differences between these two notions of Markovianity \cite{CKR}.  In particular, it was stated that for invertible dynamical maps, both conditions are equivalent in a generalized sense: when the unidirectional information flow condition is formulated for generally biased discrimination problems and in the presence of extra systems acting as witness of the open system evolution. This equivalence was sharpen in \cite{Acin}, by assuming only unbiased discrimination problems at the expense of using extra systems with larger dimension than in \cite{CKR}. 

The problem of the equivalence between unidirectional information flow and CP-divisibility has been also studied for general noninvertible dynamics, albeit  the situation is considerably more intricate in such a case, and only partial results are known \cite{Acin,CRS,CC}. Noninvertible dynamics are typical of open quantum systems subjected to measurements, or when they are in contact with an environment made up of a finite number of degrees of freedom (see e.g. \cite{Erika,Jyrki,Si22} and references therein). Nevertheless, as fas as we know, there is no counterexample in the literature preventing from a general equivalence between unidirectional information flow and positive divisibility. The goal of the present work is to fill that gap. 

The counterexample we present here is inspired by a note by Matsumoto \cite{Matsumoto} focused on the failure of the Alberti-Uhlmann condition \cite{Alberti-Uhlmann} for quantum systems with dimension larger than 2 (see also \cite{Extending}). It is interesting to point the key role played by the continuity condition on dynamical maps in these equivalence problems, as relaxing continuity allows for a universal equivalence between CP-divisibility and unidirectional backflow of information in a  generalized sense \cite{BuscemiDatta}. Before explaining the counterexample we shall comment on the previous results in order to pose the problem in correct terms and fix our notation.

\section{Results on the equivalence between unidirectional information flow and positive divisibility}

We shall consider a quantum system with an associated $d$-dimensional Hilbert space $\mathcal{H}$,  and we denote by $\mathfrak{H}(\mathcal{H})$ the algebra of hermitian operators on $\mathcal{H}$.  In what follows, we shall restrict our attention to finite dimensional Hilbert spaces $d<\infty$.

\begin{definition}[Dynamical Map] We shall say that a one-parameter family of completely positive and trace preserving (CPTP) maps $\{\Lambda_t\}_{t\geq0}$ on $\mathfrak{H}(\mathcal{H})$ forms a \emph{dynamical map} if the correspondence $t\mapsto\Lambda_t$ is continuous and $\Lambda_0=\mathds{1}$.  
\end{definition}

According to Breuer, Laine and Piilo (BPL) \cite{BLP}, quantum Markovianity is identified with the lack of information backflow as quantified by
\begin{equation}\label{BLP}
\frac{d}{dt}\|\Lambda_t(\rho_1)-\Lambda_t(\rho_2)\|_1\leq 0,
\end{equation}
for all pairs $\rho_{1,2}$ of density matrices in $\mathfrak{H}(\mathcal{H})$. Here, trace distance is used, which is the metric induced by the trace norm $\|X\|_1:=\Tr|X|=\Tr\sqrt{X^\dagger X}$ for $X\in \mathfrak{H}(\mathcal{H})$.  Throughout the paper, we shall assume right derivatives in case of nonequivalent left and right limits in expressions like \eqref{BLP}:
\begin{equation}
\frac{d}{dt}f(t):=\lim_{\epsilon\downarrow0}\frac{f(t+\epsilon)-f(t)}{\epsilon}.
\end{equation}

On the other hand, Rivas, Huelga and Plenio (RHP) \cite{RHP} suggested that a dynamical map $\{\Lambda_t\}_{t\geq0}$ should be called Markovian if and only if it is CP-divisible, i.e.  it admits the decomposition
\begin{equation}\label{Divisibledecomposition}
\Lambda_t=\Lambda_{t,s}\Lambda_s, \quad t>s>0,
\end{equation}
with $\Lambda_{t,s}$ another CPTP map. If $\Lambda_{t,s}$ is not completely positive but just positive, the dynamics is called P-divisible. Thus CP-divisible maps are P-divisible but the opposite is not true.

The BLP approach is connected to the one-shot discrimination problem. Namely, the quantity $\frac{1}{2}-\tfrac{1}{4}\|\rho_1-\rho_2\|_1$ is the minimum average error probability of when trying to discriminate, by performing one single measurement, whether the quantum system was prepared in the state $\rho_1$ or in $\rho_2$ with equal prior probabilities $p_1=p_2=1/2$. Therefore, if \eqref{BLP} is violated for some $t$, the transitory increment in the trace distance between $\Lambda_t(\rho_1)$ and $\Lambda_t(\rho_2)$ can be identified as a backflow of information from environment to the system. Otherwise, the flow of information is always unidirectional from system to environment, leading to a degradation in the capability to discriminate between $\rho_1$ and $\rho_2$. Following the same reasoning, one can extend the approach to account for information backflow in generally biased discrimination problems, where the prior probabilities for preparing $\rho_1$ or $\rho_2$ are different $p_1\neq p_2$. In such a case, the minimum average error probability becomes $\frac{1}{2}-\tfrac{1}{2}\|p_1\rho_1-p_2\rho_2\|_1$  \cite{Helstrom,Hayashi}. Thus, a generalized BLP condition for quantum Markovianity reads
\begin{equation}\label{BLP2}
\frac{d}{dt}\|\Lambda_t(p_1\rho_1-p_2\rho_2)\|_1\leq 0,
\end{equation}
for any probabilities $p_{1,2}$, $p_1+p_2=1$ and density matrices $\rho_{1,2}\in\mathfrak{H}(\mathcal{H})$. This is equivalent to monotonic contractivity of $\Lambda_t$ in $\mathfrak{H}(\mathcal{H})$, 
\begin{equation}\label{BLP3}
\frac{d}{dt}\|\Lambda_t(X)\|_1\leq 0, \quad X\in\mathfrak{H}(\mathcal{H}).
\end{equation}
This condition can be connected to the divisibility approach. In particular, Chru\'sci\'nski, Kossakowski and Rivas, showed the following result.

\begin{Theorem}[\cite{CKR}] If $\{\Lambda_t\}_{t\geq0}$ is an invertible dynamical map, i.e. $\Lambda_{t}^{-1}$ exists for $t\geq0$, then the condition \eqref{BLP3} is equivalent to P-divisibility of $\{\Lambda_t\}_{t\geq0}$, and, in addition, $\{\Lambda_t\}_{t\geq0}$ is CP-divisible if and only if
\begin{equation}\label{CKR}
\frac{d}{dt}\|\Lambda_t\otimes\mathds{1}(X)\|_1\leq 0, \quad X\in\mathfrak{H}(\mathcal{H}\otimes\mathcal{H} ). 
\end{equation}
\end{Theorem}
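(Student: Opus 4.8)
The plan is to reduce both equivalences to one elementary fact about trace-preserving maps and then to exploit invertibility through a single substitution. The fact is: a trace-preserving linear map $\Phi$ on $\mathfrak{H}(\mathcal{H})$ is positive if and only if it is a trace-norm contraction, $\|\Phi(X)\|_1\le\|X\|_1$ for all $X\in\mathfrak{H}(\mathcal{H})$; and, applying the same statement to $\Phi\otimes\mathds{1}$ acting on $\mathfrak{H}(\mathcal{H}\otimes\mathcal{H})$ together with the standard identity ``$d$-positivity $=$ complete positivity'' in dimension $d=\dim\mathcal{H}$, that $\Phi$ is completely positive if and only if $\Phi\otimes\mathds{1}$ is a trace-norm contraction. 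The forward direction is the Jordan-decomposition estimate $\|\Phi(X_+-X_-)\|_1\le\Tr\Phi(X_+)+\Tr\Phi(X_-)=\Tr X_++\Tr X_-=\|X\|_1$ using positivity and trace preservation; the converse follows by writing $\Phi(\rho)=P_+-P_-$ with $P_\pm\ge0$ of orthogonal support for $\rho\ge0$, so that $\Tr P_+-\Tr P_-=\Tr\rho=\|\rho\|_1\ge\|\Phi(\rho)\|_1=\Tr P_++\Tr P_-$ forces $\Tr P_-=0$, hence $\Phi(\rho)\ge0$.

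Next I would record where invertibility is used: since $\Lambda_s^{-1}$ exists and is automatically trace preserving (apply trace preservation of $\Lambda_s$ to the argument $\Lambda_s^{-1}(X)$), the intermediate propagator $\Lambda_{t,s}:=\Lambda_t\Lambda_s^{-1}$ is a well-defined trace-preserving map for all $t>s\ge0$, and P-divisibility (resp.\ CP-divisibility) is by definition the statement that every $\Lambda_{t,s}$ is positive (resp.\ completely positive).

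Then I would prove the equivalence of \eqref{BLP3} with P-divisibility. For the ``if'' part: if each $\Lambda_{t,s}$ is positive, the fact gives $\|\Lambda_t(X)\|_1=\|\Lambda_{t,s}\Lambda_s(X)\|_1\le\|\Lambda_s(X)\|_1$ for $t>s$, so $t\mapsto\|\Lambda_t(X)\|_1$ is non-increasing and its right derivative is $\le0$. For the ``only if'' part: a continuous real function whose right derivative is everywhere $\le0$ is non-increasing — this is the step where the continuity built into the definition of a dynamical map is essential — so \eqref{BLP3} yields $\|\Lambda_t(X)\|_1\le\|\Lambda_s(X)\|_1$ for all $X\in\mathfrak{H}(\mathcal{H})$ and all $t\ge s\ge0$; substituting $X=\Lambda_s^{-1}(Y)$ gives $\|\Lambda_{t,s}(Y)\|_1\le\|Y\|_1$ for all $Y$, and the fact makes $\Lambda_{t,s}$ positive. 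The equivalence of \eqref{CKR} with CP-divisibility is the verbatim same argument carried out for $\Lambda_t\otimes\mathds{1}$ on $\mathfrak{H}(\mathcal{H}\otimes\mathcal{H})$, using the tensored version of the fact and the substitution $X=(\Lambda_s^{-1}\otimes\mathds{1})(Y)$; here it matters that the witness system has dimension exactly $d$, just enough for $d$-positivity of $\Lambda_{t,s}$ to upgrade to complete positivity.

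The routine parts are the two estimates inside the fact and the real-analysis monotonicity lemma. The main conceptual hinge — and the point whose failure the rest of the paper exploits — is the substitution $X\mapsto\Lambda_s^{-1}(Y)$: it is legitimate only because $\Lambda_s$ is invertible, so that contractivity of $\Lambda_t$ on the image of $\Lambda_s$ (which is all of $\mathfrak{H}(\mathcal{H})$) translates into contractivity, hence positivity, of $\Lambda_{t,s}$ on the whole space. For noninvertible dynamics the image of $\Lambda_s$ is a proper subspace, the substitution is unavailable, and this is exactly the gap in which a map that contracts the trace norm yet fails positive divisibility can live.
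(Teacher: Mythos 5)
The paper states this theorem without proof, importing it from the cited reference \cite{CKR}, so there is no in-paper argument to compare against; your proposal is correct and follows the standard route of that reference. The two pillars you use --- Kossakowski's characterization of positive trace-preserving maps as trace-norm contractions on $\mathfrak{H}(\mathcal{H})$ (tensored with $\mathds{1}_d$ for the CP case), and the substitution $X=\Lambda_s^{-1}(Y)$ made possible by invertibility --- are exactly the ingredients of the original proof, and your closing remark correctly identifies the failure of that substitution as the gap the rest of the paper exploits.
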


There are other notable ways to establish the equivalence between CP-divisible and monotonic contractivity for invertible dynamical maps:

\begin{Theorem}[\cite{Acin}] If $\{\Lambda_t\}_{t\geq0}$ is an invertible dynamical map, i.e. $\Lambda_{t}^{-1}$ exists for $t\geq0$, then it is CP-divisible if and only if
\begin{equation}
\frac{d}{dt}\|\Lambda_t\otimes\mathds{1}_{d+1}(\rho_1-\rho_2)\|_1\leq 0, 
\end{equation}
for all density matrices $\rho_{1,2} \in\mathfrak{H}(\mathcal{H}\otimes\mathcal{H}' )$, with $d=\dim \mathcal{H}=\dim \mathcal{H}'-1$.
\end{Theorem}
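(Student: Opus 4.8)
\emph{Proof strategy.} The plan is to prove the two implications separately; the forward one is short, and essentially all the content — including the sharpness of the ancilla dimension $d+1$ — sits in the reverse one. For ($\Rightarrow$), if $\{\Lambda_t\}_{t\geq0}$ is CP-divisible then for $t>s$ the propagator $\Lambda_{t,s}$ is CPTP, hence so is $\Lambda_{t,s}\otimes\mathds{1}_{d+1}$, and any positive trace-preserving map $\Phi$ is a trace-norm contraction on Hermitian operators: decomposing $Y=Y_+-Y_-$ into positive and negative parts,
\[
\|\Phi(Y)\|_1\leq\|\Phi(Y_+)\|_1+\|\Phi(Y_-)\|_1=\Tr\Phi(Y_+)+\Tr\Phi(Y_-)=\Tr Y_++\Tr Y_-=\|Y\|_1 .
\]
Applying this to $\Lambda_{t,s}\otimes\mathds{1}_{d+1}$ and using $\Lambda_t\otimes\mathds{1}_{d+1}=(\Lambda_{t,s}\otimes\mathds{1}_{d+1})(\Lambda_s\otimes\mathds{1}_{d+1})$ shows $t\mapsto\|\Lambda_t\otimes\mathds{1}_{d+1}(\rho_1-\rho_2)\|_1$ is nonincreasing, so its right derivative is $\leq0$.

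\emph{The reverse implication: reduction.} Assume the derivative condition. Continuity of $t\mapsto\Lambda_t$ together with a nonpositive upper right (Dini) derivative everywhere forces $t\mapsto\|\Lambda_t\otimes\mathds{1}_{d+1}(\rho_1-\rho_2)\|_1$ to be nonincreasing; since every trace-zero Hermitian $Y$ on $\mathcal{H}\otimes\mathcal{H}'$ equals $\tfrac12\|Y\|_1(\rho_1-\rho_2)$ for suitable density matrices $\rho_{1,2}$ and the trace norm is homogeneous, we obtain $\|\Lambda_t\otimes\mathds{1}_{d+1}(Y)\|_1\leq\|\Lambda_s\otimes\mathds{1}_{d+1}(Y)\|_1$ for all $t>s\geq0$ and all such $Y$. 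Fix $t>s>0$ and set $\Lambda_{t,s}:=\Lambda_t\Lambda_s^{-1}$, which exists by invertibility and is trace and Hermiticity preserving, as is $\Lambda_s^{-1}$. For any trace-zero Hermitian $Z$ on $\mathcal{H}\otimes\mathcal{H}'$ the operator $Y=(\Lambda_s^{-1}\otimes\mathds{1}_{d+1})(Z)$ is again trace-zero and Hermitian, so
\[
\|(\Lambda_{t,s}\otimes\mathds{1}_{d+1})(Z)\|_1=\|(\Lambda_t\otimes\mathds{1}_{d+1})(Y)\|_1\leq\|(\Lambda_s\otimes\mathds{1}_{d+1})(Y)\|_1=\|Z\|_1 .
\]
Hence $\Lambda_{t,s}\otimes\mathds{1}_{d+1}$ contracts the trace norm on trace-zero Hermitians, and it remains to show this makes $\Lambda_{t,s}$ completely positive; then, since also $\Lambda_{t,0}=\Lambda_t$ is CPTP by hypothesis, $\{\Lambda_t\}_{t\geq0}$ is CP-divisible.

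\emph{The key step.} Suppose a trace- and Hermiticity-preserving map $\Phi$ on $\mathfrak{H}(\mathcal{H})$ with $\Phi\otimes\mathds{1}_{d+1}$ trace-norm contracting on trace-zero Hermitians were not CP. Then $\Phi\otimes\mathds{1}_{d}$ is not positive, and since positivity is convex and preserved on pure states there is a pure state $|\psi\rangle\langle\psi|$ on $\mathcal{H}\otimes\mathbb{C}^{d}$ whose image under $\Phi\otimes\mathds{1}_{d}$ has a strictly negative eigenvalue. As $|\psi\rangle$ has Schmidt rank at most $d$, regard it as a vector in $\mathcal{H}\otimes V$ with $V\subset\mathcal{H}'$ a $d$-dimensional subspace, and let $|e\rangle$ span the orthogonal complement of $V$ in $\mathcal{H}'$. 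Put
\[
X:=|\psi\rangle\langle\psi|-\frac{\mathds{1}_d}{d}\otimes|e\rangle\langle e| ,
\]
a difference of two density matrices whose positive parts have orthogonal supports, so $\|X\|_1=2$. These supports, $\mathcal{H}\otimes V$ and $\mathcal{H}\otimes\mathbb{C}|e\rangle$, remain orthogonal under $\Phi\otimes\mathds{1}_{d+1}$, so the image is block diagonal and
\[
\|(\Phi\otimes\mathds{1}_{d+1})(X)\|_1=\|(\Phi\otimes\mathds{1}_d)(|\psi\rangle\langle\psi|)\|_1+\|\Phi(\mathds{1}_d/d)\|_1>1+1=2 ,
\]
because the first summand is Hermitian with unit trace yet has a negative eigenvalue, while the second is $\geq\Tr\Phi(\mathds{1}_d/d)=1$ by trace preservation. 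This contradicts contractivity, so $\Phi$, and hence each $\Lambda_{t,s}$, is CP.

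\emph{Main obstacle.} The forward implication and the invertibility reduction are routine, as is the analytic passage from a nonpositive right derivative to monotonicity. The crux is the key step and, within it, the \emph{tightness} of the ancilla dimension: one needs $d$ extra dimensions to host a Schmidt-rank-$\leq d$ vector witnessing the failure of complete positivity, and exactly one further dimension to ``park'' a reference state whose image contributes a clean $+1$ to the trace norm through orthogonality. It is this parking slot that a bare $d$-dimensional ancilla lacks, which is precisely why the bound $d+1$ genuinely improves on, rather than follows from, the Chru\'sci\'nski--Kossakowski--Rivas construction.
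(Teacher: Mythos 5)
The paper states this theorem as a cited result from Bylicka--Johansson--Ac\'in and gives no proof of its own, so there is nothing internal to compare against; your argument is a correct reconstruction of the standard proof from that reference. In particular, the reduction via $\Lambda_{t,s}=\Lambda_t\Lambda_s^{-1}$, the passage from unbiased trace-distance monotonicity to contractivity on trace-zero Hermitians, and the key witness $X=|\psi\rangle\langle\psi|-\tfrac{1}{d}\mathds{1}_d\otimes|e\rangle\langle e|$ exploiting the extra ancilla dimension are exactly the ingredients of the original argument, and each step (orthogonality of the images, strict excess of the first block's trace norm over its unit trace, and the lower bound of $1$ on the second block) checks out.
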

This theorem reduces the norm contractivity condition \eqref{CKR} to trace distance contractivity between density matrices as in the initial BLP formulation \eqref{BLP}, at the expense of increase the dimension of the second subsystem by one. 

\begin{Theorem}[\cite{FabioDarek}] \label{TheoTensorProduct} If $\{\Lambda_t\}_{t\geq0}$ is an invertible dynamical map, i.e. $\Lambda_{t}^{-1}$ exists for $t\geq0$, then it is CP-divisible if and only if
\begin{equation}
\frac{d}{dt}\|\Lambda_t\otimes\Lambda_t(X)\|_1\leq 0, \quad X\in\mathfrak{H}(\mathcal{H}\otimes\mathcal{H} ). 
\end{equation}
\end{Theorem}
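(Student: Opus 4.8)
\emph{Proof strategy.} The ``only if'' implication is the easy one. If $\{\Lambda_t\}$ is CP-divisible then $\Lambda_t\otimes\Lambda_t=(\Lambda_{t,s}\otimes\Lambda_{t,s})(\Lambda_s\otimes\Lambda_s)$, and since the tensor product of two CPTP maps is again CPTP, $\Lambda_{t,s}\otimes\Lambda_{t,s}$ is CPTP; as every CPTP map is a trace-norm contraction on $\mathfrak{H}(\cdot)$ (write $X=X_+-X_-$ and use positivity together with trace preservation), the function $t\mapsto\|\Lambda_t\otimes\Lambda_t(X)\|_1$ is nonincreasing for every $X\in\mathfrak{H}(\mathcal{H}\otimes\mathcal{H})$, which gives the displayed inequality.

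For the converse, the first step I would take is to note that $\{\Lambda_t\otimes\Lambda_t\}_{t\geq0}$ is itself an invertible dynamical map on $\mathcal{H}\otimes\mathcal{H}$ (it is CPTP, continuous, equals $\mathds{1}$ at $t=0$, and has inverse $\Lambda_t^{-1}\otimes\Lambda_t^{-1}$), with time-local generator $\widetilde{\mathcal{L}}_t:=\mathcal{L}_t\otimes\mathds{1}+\mathds{1}\otimes\mathcal{L}_t$, where $\mathcal{L}_t:=\dot\Lambda_t\Lambda_t^{-1}$, and two-parameter propagator $\Lambda_{t,s}\otimes\Lambda_{t,s}$. Applying the theorem of \cite{CKR} quoted above to this dynamical map, the hypothesis $\frac{d}{dt}\|\Lambda_t\otimes\Lambda_t(X)\|_1\leq0$ for all $X\in\mathfrak{H}(\mathcal{H}\otimes\mathcal{H})$ is \emph{equivalent} to P-divisibility of $\{\Lambda_t\otimes\Lambda_t\}$, i.e.\ to positivity of $\Lambda_{t,s}\otimes\Lambda_{t,s}$ for all $t>s\geq0$. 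The theorem is thereby reduced to the statement that positivity of $\Lambda_{t,s}\otimes\Lambda_{t,s}$ for all $t>s$ forces $\Lambda_{t,s}$ to be completely positive.

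The next step is to pass to the generator. Since $\Lambda_{t,s}\to\mathds{1}$ as $t\downarrow s$ with $\partial_t\Lambda_{t,s}|_{t=s^+}=\mathcal{L}_s$, a first-order (Kossakowski-type) analysis of the positivity of $\Lambda_{t,s}\otimes\Lambda_{t,s}$ for $t$ just above $s$ forces, for every $s$ and all $|\Psi\rangle,|\Phi\rangle\in\mathcal{H}\otimes\mathcal{H}$ with $\langle\Phi|\Psi\rangle=0$,
\begin{equation*}
\langle\Phi|\,\big(\mathcal{L}_s\otimes\mathds{1}+\mathds{1}\otimes\mathcal{L}_s\big)(|\Psi\rangle\langle\Psi|)\,|\Phi\rangle\;\geq\;0 .
\end{equation*}
I would then show that this condition (for all $s$) is equivalent to $\mathcal{L}_s$ being of GKLS form; the standard construction of the propagator from a time-dependent GKLS generator then yields complete positivity of $\Lambda_{t,s}$ for every $t>s$, i.e.\ CP-divisibility. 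One direction is immediate: if $\mathcal{L}_s$ is GKLS then $\ee^{\tau\mathcal{L}_s}$ is CP, hence $\ee^{\tau\mathcal{L}_s}\otimes\ee^{\tau\mathcal{L}_s}$ is CP and a fortiori positive, which returns the inequality above. For the nontrivial direction I would use the exchange symmetry of the two identical factors: conjugation by the swap operator interchanges $\mathcal{L}_s\otimes\mathds{1}$ and $\mathds{1}\otimes\mathcal{L}_s$, so on test vectors of definite exchange symmetry the two summands coincide and the inequality collapses to $\langle\Phi|(\mathcal{L}_s\otimes\mathds{1})(|\Psi\rangle\langle\Psi|)|\Phi\rangle\geq0$; writing a symmetric $|\Psi\rangle$ as $(A\otimes\mathds{1})|\omega\rangle$ with $|\omega\rangle$ an unnormalized maximally entangled vector and $A$ a complex-symmetric matrix, and letting $A$ and $|\Phi\rangle$ range freely, one should recover precisely the Choi positivity condition $P^{\perp}_{\omega}\,C_{\mathcal{L}_s}\,P^{\perp}_{\omega}\geq0$ characterizing the GKLS form, with the constraint $\langle\Phi|\Psi\rangle=0$ matching orthogonality to $|\omega\rangle$.

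The main obstacle is exactly this last step — the only place where \emph{complete} positivity of $\mathcal{L}_s$, and not just positivity, is extracted: one must verify that restricting the doubled conditional-positivity inequality to symmetric (and antisymmetric) test vectors, while letting the symmetric matrix $A$ vary, already sweeps out enough vectors to reconstruct the full Choi condition for $\mathcal{L}_s$. This is where the ``self-tensor'' structure $\Lambda_t\otimes\Lambda_t$ does its work, and where invertibility and continuity of $t\mapsto\Lambda_t$ enter in a nonremovable way: invertibility to identify the propagator $\Lambda_{t,s}$ and to let $\Lambda_s^{-1}(|\Psi\rangle\langle\Psi|)$ run over a full set of hermitian test operators, continuity to pass from the infinitesimal condition back to complete positivity of the finite-time propagator. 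Relaxing either of these is precisely what leaves room for the phenomena analyzed in the rest of the paper.
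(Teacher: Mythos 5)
The paper states this theorem without proof, quoting it from \cite{FabioDarek}, so there is no internal argument of the paper to compare yours against; I can only assess your proposal on its own terms, and it essentially reconstructs the route of that reference. The ``only if'' direction and the reduction are fine: applying the quoted CKR theorem to the invertible dynamical map $\{\Lambda_t\otimes\Lambda_t\}_{t\geq0}$ correctly converts the hypothesis into P-divisibility of the doubled family, and you are right that the entire content then sits in the generator-level lemma that conditional positivity of $\mathcal{L}_s\otimes\mathds{1}+\mathds{1}\otimes\mathcal{L}_s$ forces $\mathcal{L}_s$ to be GKLS (the corresponding single-map statement would be false, by the transposition example the paper itself mentions right after the theorem). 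The step you flag as the ``main obstacle'' is indeed where all the work is, but it does close, and the missing ingredient is classical: every $C\in M_d(\mathbb{C})$ is similar to its transpose via an invertible \emph{symmetric} $S$, and then $SC$ is symmetric and $C=S^{-1}(SC)$ is a product of two complex symmetric matrices. With $|\Psi\rangle=(A\otimes\mathds{1})|\omega\rangle$ and $|\Phi\rangle=(B\otimes\mathds{1})|\omega\rangle$, $A,B$ symmetric, the symmetric-sector inequality becomes $\langle\chi|C_{\mathcal{L}_s}|\chi\rangle\geq0$ with $|\chi\rangle=(B\bar{A}\otimes\mathds{1})|\omega\rangle$, and the constraint $\langle\Phi|\Psi\rangle=0$ is exactly $\Tr(B\bar{A})=0$, i.e. $|\chi\rangle\perp|\omega\rangle$; since products of two symmetric matrices exhaust $M_d(\mathbb{C})$, the traceless ones exhaust $\omega^{\perp}$ and you recover the full condition $P^{\perp}_{\omega}C_{\mathcal{L}_s}P^{\perp}_{\omega}\geq0$. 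So the strategy is sound and completable, with the symmetric sector alone already sufficient.

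Two caveats. First, your argument requires the time-local generator $\mathcal{L}_t=\dot\Lambda_t\Lambda_t^{-1}$ to exist, i.e. differentiability of $t\mapsto\Lambda_t$, which is not among the stated hypotheses (only continuity and invertibility); this assumption is standard in the cited literature but should be made explicit, or else replaced by a finite-time argument on $\Lambda_{t,s}$ near $t=s$. Second, when you write ``letting $A$ and $|\Phi\rangle$ range freely,'' note that $|\Phi\rangle$ must also be of definite exchange symmetry for the two summands of the doubled generator to collapse into one; as indicated above, restricting both test vectors to the symmetric sector costs nothing.
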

Note that the condition $\|\mathcal{E}\otimes\mathcal{E}(X)\|\leq \|X\|$ is not enough for $\mathcal{E}$ to be CPTP (the transposition is a clear counterexample), so Theorem \ref{TheoTensorProduct} shows the nontrivial role played by the continuity condition of a dynamical map.

The connection between monotonic norm contractivity and divisibility properties was also formulated for general $k$-divisible dynamical maps. Namely, those that admit a decomposition such as \eqref{Divisibledecomposition}  with $\Lambda_{t,s}$ $k$-positive. Thus $k$-divisible dynamical maps lie in between P-divisible maps or 1-divisible, and CP-divisible maps or $d$-divisible. 

\begin{Theorem}[\cite{DarekSabrina}] \label{TheoDarekSabrina}
 If $\{\Lambda_t\}_{t\geq0}$ is an invertible dynamical map, i.e. $\Lambda_{t}^{-1}$ exists for $t\geq0$, then it is $k$-divisible if and only if
\begin{equation}
\frac{d}{dt}\|\Lambda_t\otimes\mathds{1}_k(X)\|_1\leq 0, \quad X\in\mathfrak{H}(\mathcal{H}\otimes\mathcal{H}' ),
\end{equation}
with $\dim\mathcal{H}'=k$.
\end{Theorem}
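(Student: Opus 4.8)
The plan is to prove the two implications separately, the substantive one resting—exactly as in the CKR theorem—on invertibility together with the fact that, for trace-preserving maps, positivity and trace-norm contractivity coincide.

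For ``$k$-divisible $\Rightarrow$ monotonicity'', suppose $\Lambda_t=\Lambda_{t,s}\Lambda_s$ for all $t>s\geq0$ with $\Lambda_{t,s}$ $k$-positive and trace preserving. The first ingredient is the elementary lemma that any positive trace-preserving map $\Phi$ contracts the trace norm on Hermitian operators: from the Jordan decomposition $X=X_+-X_-$ with $X_\pm\geq0$ of orthogonal support,
\[
\|\Phi(X)\|_1\leq\|\Phi(X_+)\|_1+\|\Phi(X_-)\|_1=\Tr\Phi(X_+)+\Tr\Phi(X_-)=\Tr X_++\Tr X_-=\|X\|_1 .
\]
Applying this to $\Phi=\Lambda_{t,s}\otimes\mathds{1}_k$—which is positive precisely because $\Lambda_{t,s}$ is $k$-positive, and is trace preserving—gives $\|\Lambda_{t,s}\otimes\mathds{1}_k(Y)\|_1\leq\|Y\|_1$ for all $Y\in\mathfrak{H}(\mathcal{H}\otimes\mathcal{H}')$; substituting $Y=\Lambda_s\otimes\mathds{1}_k(X)$ shows that $t\mapsto\|\Lambda_t\otimes\mathds{1}_k(X)\|_1$ is nonincreasing on $[0,\infty)$, hence has nonpositive (right) derivative.

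For the converse, invertibility lets me set $\Lambda_{t,s}:=\Lambda_t\Lambda_s^{-1}$ for $t>s\geq0$; this is the unique candidate divisor and is trace preserving since $\Tr\Lambda_s^{-1}(X)=\Tr\Lambda_s\Lambda_s^{-1}(X)=\Tr X$, so it remains only to show it is $k$-positive. Because $t\mapsto\Lambda_t$ is continuous, $f(t):=\|\Lambda_t\otimes\mathds{1}_k(X)\|_1$ is continuous, and the hypothesis that its right derivative is nonpositive everywhere forces $f$ to be nonincreasing; taking $X=\Lambda_s^{-1}\otimes\mathds{1}_k(Y)$ then yields $\|\Lambda_{t,s}\otimes\mathds{1}_k(Y)\|_1=f(t)\leq f(s)=\|Y\|_1$ for every Hermitian $Y$. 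Thus $\Lambda_{t,s}\otimes\mathds{1}_k$ is trace preserving and trace-norm contractive; evaluating on an arbitrary density matrix $\rho\in\mathfrak{H}(\mathcal{H}\otimes\mathcal{H}')$, the chain $1=\Tr[\Lambda_{t,s}\otimes\mathds{1}_k(\rho)]\leq\|\Lambda_{t,s}\otimes\mathds{1}_k(\rho)\|_1\leq\|\rho\|_1=1$, together with the elementary fact that a Hermitian $Z$ with $\|Z\|_1=\Tr Z$ is positive semidefinite, shows $\Lambda_{t,s}\otimes\mathds{1}_k(\rho)\geq0$; since every positive operator is a nonnegative multiple of a state, $\Lambda_{t,s}\otimes\mathds{1}_k$ is positive, i.e.\ $\Lambda_{t,s}$ is $k$-positive, which establishes $k$-divisibility. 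For $k=1$ one recovers the equivalence between \eqref{BLP3} and P-divisibility, and for $k=d$ the equivalence with CP-divisibility.

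The step I expect to be the main obstacle is the passage from the pointwise sign condition on the derivative to genuine monotonicity of $f$: since $t\mapsto\Lambda_t$ is only assumed continuous, $f$ need not be differentiable, so one must invoke the standard criterion that a continuous function with nonpositive upper right Dini derivative is nonincreasing and check it applies with the one-sided-derivative convention fixed after \eqref{BLP}. The remaining ingredients are routine, but the trace-preservation hypothesis in ``contractive $\Rightarrow$ positive'' is indispensable—the transpose map contracts the trace norm yet is not positive—which is the same phenomenon flagged after Theorem~\ref{TheoTensorProduct}.
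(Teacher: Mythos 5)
The paper states this theorem without proof, merely citing Chru\'sci\'nski and Maniscalco \cite{DarekSabrina}, so there is no in-paper proof to compare against; your argument is correct and is essentially the standard one from that reference (and from \cite{CKR} for the cases $k=1$ and $k=d$). Both directions are sound --- $k$-positivity plus trace preservation yields trace-norm contractivity of $\Lambda_{t,s}\otimes\mathds{1}_k$ via the Jordan decomposition, and invertibility together with the ``trace-preserving and contractive implies positive'' lemma gives the converse --- and your flagging of the Dini-derivative step as the correct way to pass from the pointwise sign condition on the right derivative to genuine monotonicity of $t\mapsto\|\Lambda_t\otimes\mathds{1}_k(X)\|_1$ is exactly the technicality that needs care under the one-sided-derivative convention fixed after \eqref{BLP}.
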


The situation for noninvertible dynamical maps is considerable more difficult. However, Theorem \ref{TheoDarekSabrina} admits a generalization for the so-called image nonincreasing dynamical maps.

\begin{Theorem}[\cite{CRS}]\label{TheoremCRS}
If $\{\Lambda_t\}_{t\geq0}$ is an image nonincreasing dynamical map, i.e. $\Im(\Lambda_{t})\subseteq \Im(\Lambda_{s})$ for $t\geq s$, then it is $k$-divisible if and only if
\begin{equation}\label{CRS_k}
\frac{d}{dt}\|\Lambda_t\otimes\mathds{1}_k(X)\|_1\leq 0, \quad X\in\mathfrak{H}(\mathcal{H}\otimes\mathcal{H}' ),
\end{equation}
with $\dim\mathcal{H}'=k$.
\end{Theorem}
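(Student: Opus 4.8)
The plan is to prove the two implications separately, after first recasting \eqref{CRS_k} in integrated form: since $t\mapsto\Lambda_t$ is continuous, each function $t\mapsto\|(\Lambda_t\otimes\mathds{1}_k)(X)\|_1$ is continuous, and a continuous real function with everywhere nonpositive right derivative is nonincreasing, so \eqref{CRS_k} is equivalent to the monotonicity
\begin{equation}
\|(\Lambda_t\otimes\mathds{1}_k)(X)\|_1\leq\|(\Lambda_s\otimes\mathds{1}_k)(X)\|_1,\qquad 0\leq s\leq t,\quad X\in\mathfrak{H}(\mathcal{H}\otimes\mathcal{H}').
\end{equation}
All of the argument below is carried out with this inequality, which is what allows the noninvertible case to be handled uniformly.

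First suppose $\{\Lambda_t\}$ is $k$-divisible, so $\Lambda_t=\Lambda_{t,s}\Lambda_s$ with $\Lambda_{t,s}$ trace preserving and $k$-positive. Then $\Lambda_{t,s}\otimes\mathds{1}_k$ is positive and trace preserving, hence a trace-norm contraction on $\mathfrak{H}(\mathcal{H}\otimes\mathcal{H}')$ by the elementary Kossakowski-type estimate: splitting a Hermitian $Y=Y_+-Y_-$ into its positive and negative parts, $\|(\Lambda_{t,s}\otimes\mathds{1}_k)(Y)\|_1\leq\Tr(\Lambda_{t,s}\otimes\mathds{1}_k)(Y_+)+\Tr(\Lambda_{t,s}\otimes\mathds{1}_k)(Y_-)=\Tr Y_++\Tr Y_-=\|Y\|_1$. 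Applying this with $Y=(\Lambda_s\otimes\mathds{1}_k)(X)$ and using $(\Lambda_t\otimes\mathds{1}_k)(X)=(\Lambda_{t,s}\otimes\mathds{1}_k)(Y)$ yields the displayed monotonicity, hence \eqref{CRS_k}. This direction does not use the image nonincreasing hypothesis.

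Conversely, assume \eqref{CRS_k}; one must construct the propagator, mirroring the invertible-case argument of Theorem \ref{TheoDarekSabrina} but with $\Lambda_s^{-1}$ replaced by a generalized inverse. Restricting the monotonicity to inputs $Z\otimes\omega$ with $\omega$ a fixed state on $\mathcal{H}'$ (which reduces it to $k=1$) gives $\|\Lambda_t(Z)\|_1\leq\|\Lambda_s(Z)\|_1$, so $\ker\Lambda_s\subseteq\ker\Lambda_t$, and the prescription $\Lambda_{t,s}:=\Lambda_t\circ\Lambda_s^{+}$, with $\Lambda_s^{+}$ the Hermiticity-preserving Moore--Penrose generalized inverse of $\Lambda_s$, defines a linear map with $\Lambda_{t,s}\Lambda_s=\Lambda_t$. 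Using that $\ker\Lambda_s$ consists of traceless operators (because $\Lambda_s$ is trace preserving) one checks that $\Lambda_{t,s}$ is trace preserving on $\Im(\Lambda_s)$, and the monotonicity rewritten as $\|(\Lambda_{t,s}\otimes\mathds{1}_k)(Y)\|_1\leq\|Y\|_1$ for all $Y$ in the subspace $\Im(\Lambda_s\otimes\mathds{1}_k)=\Im(\Lambda_s)\otimes\mathfrak{H}(\mathcal{H}')$ says it contracts the trace norm there. Now comes the partial converse of the Kossakowski estimate: a trace preserving linear map that contracts the trace norm on a subspace of Hermitian operators sends the positive cone of that subspace into positive operators — if $\rho\geq0$ lies in the subspace, normalize $\Tr\rho=1$ and write its image as $\sigma_+-\sigma_-$ with orthogonal supports; then $\Tr\sigma_+-\Tr\sigma_-=1$ and $\Tr\sigma_++\Tr\sigma_-\leq1$ force $\sigma_-=0$. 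Applied to $\Lambda_{t,s}\otimes\mathds{1}_k$ this gives $k$-positivity of $\Lambda_{t,s}$ on its natural domain $\Im(\Lambda_s)$.

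The main obstacle, and the precise point where the hypothesis $\Im(\Lambda_t)\subseteq\Im(\Lambda_s)$ is indispensable, is to upgrade the propagator from the subspace $\Im(\Lambda_s)$ to a genuine trace preserving $k$-positive map and to organize the family $\{\Lambda_{t,s}\}$ so that $\Lambda_t=\Lambda_{t,s}\Lambda_s$ together with the composition law $\Lambda_{t,r}=\Lambda_{t,s}\Lambda_{s,r}$ hold coherently for all $0\leq r\leq s\leq t$. The nesting of images guarantees that $\Lambda_{s,r}$ maps into a domain on which $\Lambda_{t,s}$ still acts as a trace-norm contraction, so that the positivity argument of the previous paragraph transfers across \emph{all} time pairs rather than just the fixed pair $(s,t)$; without it the construction need not be consistent and the equivalence is not expected to hold in general, in line with the counterexample presented in this paper, which rests on a map that is not image nonincreasing. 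I expect this extension-and-consistency step, rather than any single inequality, to be the technically delicate part of the proof.
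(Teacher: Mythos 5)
The first thing to note is that the paper contains no proof of this statement: the theorem is imported from \cite{CRS}, with the one-line remark that the proof follows the same steps as the CP-divisibility case established there. So the comparison is really against the argument of \cite{CRS}. Your forward implication ($k$-divisibility $\Rightarrow$ contractivity) is complete and correct, and your observation that it does not use the image-nonincreasing hypothesis is also right.

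The converse, however, has a genuine gap, and it sits exactly where the content of the theorem lies. What you actually establish is that $\Lambda_{t,s}:=\Lambda_t\Lambda_s^{+}$ is trace preserving and ``$k$-positive'' \emph{as a map on the subspace} $\Im(\Lambda_s)$, i.e.\ $\Lambda_{t,s}\otimes\mathds{1}_k$ sends positive elements of $\Im(\Lambda_s)\otimes\mathfrak{H}(\mathcal{H}')$ to positive operators. But $k$-divisibility in the sense of \eqref{Divisibledecomposition} requires $\Lambda_{t,s}$ to be a $k$-positive trace-preserving map on all of $\mathfrak{H}(\mathcal{H})$, and positivity on the cone of a proper subspace does not upgrade for free: off $\Im(\Lambda_s)$ the Moore--Penrose prescription gives no control, and a trace-preserving map that is positive on a subspace need not admit a positive trace-preserving extension to the whole algebra. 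You flag this extension as ``the technically delicate part'' and then do not carry it out; that step \emph{is} the proof, not a remark appended to it. Your diagnosis of where the hypothesis $\Im(\Lambda_t)\subseteq\Im(\Lambda_s)$ enters is also off target: the definition of divisibility used in the paper only asks for a propagator for each fixed pair $(s,t)$, so the composition law $\Lambda_{t,r}=\Lambda_{t,s}\Lambda_{s,r}$ across triples is not what is at stake. The hypothesis is what powers the extension step --- in particular it guarantees $\Im(\Lambda_{t,s})=\Im(\Lambda_t)\subseteq\Im(\Lambda_s)$, so the propagator is an endomorphism of the subspace on which you have control, which is the structural input for the extension argument of \cite{CRS}. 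Without supplying that extension, your argument proves only the weaker statement that there exists a trace-preserving map that is $k$-positive on $\Im(\Lambda_s)$; indeed, the counterexample constructed in this very paper produces exactly such ``subspace-positive'' propagators that admit no positive extension, which shows the gap is not merely cosmetic.
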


The proof of this theorem follows the same steps as the one for the CP-divisibility case in \cite{CRS}. Clearly, invertible dynamical maps are a special instance of image nonincreasing dynamical maps. Other important kinds of image nonincreasing dynamical maps are diganonalizable commutative dynamical maps (here commutative means $\Lambda_t\Lambda_s=\Lambda_s\Lambda_t$ for all $t$ and $s$) and normal dynamical maps, which satisfy $\Lambda_t^\dagger \Lambda_t= \Lambda_t\Lambda_t^\dagger$ with $\Lambda_t^\dagger$ is the Heisenberg picture adjoint $\Tr[X\Lambda(Y)]=\Tr[\Lambda^\dagger(X)Y]$, $X,Y\in\mathfrak{H}(\mathcal{H})$. 

The previous theorem can be trivially extended to dynamical maps unitarily equivalent to image nonincreasing ones, as the trace norm is invariant under unitary maps.

\begin{cor} \label{CorollaryCRS} Suppose that $\{\Lambda_t\}_{t\geq0}$ is unitarily equivalent to an image nonincreasing dynamical map, i.e. there exists some unitary  $\mathcal{U}_t(X)=U_t X U_t^\dagger$ for $X\in\mathfrak{H}(\mathcal{H})$ such that $\{\mathcal{U}_t\Lambda_t\}_{t\geq0}$ is image nonincreasing. Then $\{\Lambda_t\}_{t\geq0}$ is $k$-divisible if and only if \eqref{CRS_k} holds.
\end{cor}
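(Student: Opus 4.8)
The plan is to reduce the statement to Theorem~\ref{TheoremCRS} by transporting both the hypotheses and the conclusions through the time‑dependent unitary conjugation. Set $\tilde\Lambda_t:=\mathcal{U}_t\Lambda_t$, which by hypothesis is an image nonincreasing dynamical map; in particular $\tilde\Lambda_0=\mathds{1}$, i.e.\ $\mathcal{U}_0=\mathds{1}$ (if one only knew that $\{\mathcal{U}_t\Lambda_t\}$ has nested images, one could first replace $U_t$ by $U_tU_0^\dagger$, which merely conjugates every $\Im(\tilde\Lambda_t)$ by the \emph{fixed} unitary $U_0^\dagger$ and hence preserves the nesting). Theorem~\ref{TheoremCRS} then applies to $\{\tilde\Lambda_t\}$ verbatim: it is $k$-divisible if and only if $\frac{d}{dt}\|\tilde\Lambda_t\otimes\mathds{1}_k(X)\|_1\leq 0$ for all $X\in\mathfrak{H}(\mathcal{H}\otimes\mathcal{H}')$ with $\dim\mathcal{H}'=k$. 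So it suffices to establish two equivalences: (i) $\{\Lambda_t\}$ is $k$-divisible iff $\{\tilde\Lambda_t\}$ is $k$-divisible; and (ii) condition \eqref{CRS_k} holds for $\{\Lambda_t\}$ iff it holds for $\{\tilde\Lambda_t\}$.

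Claim (ii) is immediate from unitary invariance of the trace norm. Since $\mathcal{U}_t\otimes\id_k$ is conjugation by the unitary $U_t\otimes\mathds{1}_k$ on $\mathcal{H}\otimes\mathcal{H}'$, we have $\|\tilde\Lambda_t\otimes\mathds{1}_k(X)\|_1=\|(\mathcal{U}_t\otimes\id_k)\big(\Lambda_t\otimes\mathds{1}_k(X)\big)\|_1=\|\Lambda_t\otimes\mathds{1}_k(X)\|_1$ for every $X$ and every $t\geq0$. The two functions $t\mapsto\|\Lambda_t\otimes\mathds{1}_k(X)\|_1$ and $t\mapsto\|\tilde\Lambda_t\otimes\mathds{1}_k(X)\|_1$ therefore coincide identically, so one has nonpositive (right) derivative everywhere iff the other does.

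For claim (i), suppose $\Lambda_t=\Lambda_{t,s}\Lambda_s$ as in \eqref{Divisibledecomposition} with $\Lambda_{t,s}$ $k$-positive and trace preserving. Then $\tilde\Lambda_t=\mathcal{U}_t\Lambda_{t,s}\mathcal{U}_s^\dagger\,\tilde\Lambda_s$ (using $\mathcal{U}_s^\dagger\mathcal{U}_s=\id$), so set $\tilde\Lambda_{t,s}:=\mathcal{U}_t\Lambda_{t,s}\mathcal{U}_s^\dagger$. As $\mathcal{U}_t$ and $\mathcal{U}_s^\dagger$ are unitary conjugations, hence completely positive and trace preserving, $\tilde\Lambda_{t,s}$ is trace preserving, and it is $k$-positive because $\tilde\Lambda_{t,s}\otimes\id_k=(\mathcal{U}_t\otimes\id_k)(\Lambda_{t,s}\otimes\id_k)(\mathcal{U}_s^\dagger\otimes\id_k)$ is a composition of positive maps (the outer two being even completely positive). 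Hence $\{\tilde\Lambda_t\}$ is $k$-divisible. The converse is identical with the roles of $\mathcal{U}_t$ and $\mathcal{U}_t^\dagger$ exchanged, using $\Lambda_t=\mathcal{U}_t^\dagger\tilde\Lambda_t$ and $\Lambda_{t,s}:=\mathcal{U}_t^\dagger\tilde\Lambda_{t,s}\mathcal{U}_s$. Combining (i), (ii) and Theorem~\ref{TheoremCRS} proves the corollary.

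I do not anticipate a genuine obstacle here: the corollary is a transport‑of‑structure argument and, as the surrounding text already anticipates, essentially trivial once Theorem~\ref{TheoremCRS} is available. The only points worth stating explicitly are the stability of $k$-positivity under composition with completely positive maps on both sides (used in step (i)) and the harmless normalization $\mathcal{U}_0=\mathds{1}$, which guarantees that $\{\tilde\Lambda_t\}$ is a bona fide dynamical map to which Theorem~\ref{TheoremCRS} applies.
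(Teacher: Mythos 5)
Your proof is correct and makes explicit exactly the argument the paper leaves implicit (the paper gives no separate proof of the corollary beyond the remark that the trace norm is invariant under unitary maps): unitary conjugation preserves both $k$-divisibility and the contractivity condition \eqref{CRS_k}, so Theorem \ref{TheoremCRS} applied to $\tilde\Lambda_t=\mathcal{U}_t\Lambda_t$ yields the statement. The only nit is in your normalization parenthetical: to keep the images nested you should replace $U_t$ by $U_0^\dagger U_t$ (so that the new family is $\mathcal{U}_0^{-1}\mathcal{U}_t\Lambda_t$ and every image is conjugated by the \emph{fixed} unitary $U_0^\dagger$), not by $U_tU_0^\dagger$, which composes with $\Lambda_t$ on the wrong side; in any case this is moot, since the hypothesis that $\{\mathcal{U}_t\Lambda_t\}_{t\geq0}$ is itself a dynamical map already forces $\mathcal{U}_0=\mathds{1}$.
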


Beyond image nonincreasing (or unitarily equivalent) dynamical maps the equivalence has been only established for CP-divisibility of qubits.
\begin{Theorem}[\cite{CC}] \label{TheoremCC} A dynamical map $\{\Lambda_t\}_{t\geq0}$ on $\mathfrak{H}(\mathcal{H})$ with $\dim\mathcal{H}=2$ is CP-divisible if and only if
\begin{equation}\label{CC}
\frac{d}{dt}\|\Lambda_t\otimes\mathds{1}(X)\|_1\leq 0, \quad X\in\mathfrak{H}(\mathcal{H}\otimes\mathcal{H} ). 
\end{equation}
\end{Theorem}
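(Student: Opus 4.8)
\medskip
\noindent\textbf{Proof proposal.}
The plan is to establish both implications, with the content lying entirely in the ``if'' direction. For ``only if'', assume $\{\Lambda_t\}_{t\geq0}$ is CP-divisible. For $t>s>0$ one has $\Lambda_t\otimes\mathds{1}=(\Lambda_{t,s}\otimes\mathds{1})(\Lambda_s\otimes\mathds{1})$ with $\Lambda_{t,s}$ CPTP, so $\Lambda_{t,s}\otimes\mathds{1}$ is positive and trace preserving, hence a trace-norm contraction on $\mathfrak{H}(\mathcal{H}\otimes\mathcal{H})$; applying it to $\Lambda_s\otimes\mathds{1}(X)$ gives $\|\Lambda_t\otimes\mathds{1}(X)\|_1\leq\|\Lambda_s\otimes\mathds{1}(X)\|_1$. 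With the continuity of $t\mapsto\Lambda_t$ and $\Lambda_0=\mathds{1}$ this makes $t\mapsto\|\Lambda_t\otimes\mathds{1}(X)\|_1$ nonincreasing on $[0,\infty)$ for every $X\in\mathfrak{H}(\mathcal{H}\otimes\mathcal{H})$, i.e.\ \eqref{CC}. For the converse I would first integrate \eqref{CC} to the global bound $\|\Lambda_t\otimes\mathds{1}(X)\|_1\leq\|\Lambda_s\otimes\mathds{1}(X)\|_1$ for all $t\geq s\geq0$, using that a continuous function with everywhere nonpositive right derivative is nonincreasing.

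Next I would build the candidate propagator. If $Z\in\mathfrak{H}(\mathcal{H})$ and $\Lambda_s(Z)=0$, then taking $X=Z\otimes W$ with $W\in\mathfrak{H}(\mathcal{H})$ arbitrary gives $\Lambda_s\otimes\mathds{1}(X)=0$, hence $\|\Lambda_t(Z)\otimes W\|_1=0$ for all $W$, so $\Lambda_t(Z)=0$; thus $\ker\Lambda_s\subseteq\ker\Lambda_t$ whenever $t\geq s$. Therefore, for $t>s>0$, the prescription $\Lambda_{t,s}(\Lambda_s(X)):=\Lambda_t(X)$ defines unambiguously a linear map $\Lambda_{t,s}\colon\Im(\Lambda_s)\to\mathfrak{H}(\mathcal{H})$ with $\Lambda_t=\Lambda_{t,s}\Lambda_s$, and it is trace preserving on $\Im(\Lambda_s)$. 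Moreover the integrated bound gives $\|\Lambda_{t,s}\otimes\mathds{1}(Y)\|_1\leq\|Y\|_1$ for every $Y$ in the subspace $\Im(\Lambda_s)\otimes\mathfrak{H}(\mathcal{H})$, and since $\Lambda_{t,s}\otimes\mathds{1}$ is trace preserving there, this forces it to send positive elements of that subspace to positive operators.

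The decisive step is to extend $\Lambda_{t,s}$ to a CPTP map $\hat\Lambda_{t,s}$ on all of $\mathfrak{H}(\mathcal{H})$; then $\Lambda_t=\hat\Lambda_{t,s}\Lambda_s$ exhibits CP-divisibility. I would organise this by $r=\mathrm{rank}\,\Lambda_s=\dim\Im(\Lambda_s)\in\{1,2,3,4\}$. For $r=4$, $\Lambda_s$ is invertible, so $\Lambda_{t,s}$ is already defined, trace preserving and contractive (after tensoring with $\mathds{1}$) on all of $\mathfrak{H}(\mathcal{H}\otimes\mathcal{H})$, hence positive, hence CP. If $\Lambda_t$ has rank one, then $\Lambda_t=\Tr(\cdot)\,\tau$ for a state $\tau$ and $Y\mapsto\Tr(Y)\,\tau$ is a CPTP extension.

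The substantive cases are $\Im(\Lambda_s)$ a proper self-adjoint subspace of dimension $2$ or $3$: here $\Im(\Lambda_s)$ need not contain the identity, so Arveson's extension theorem is not available, and the extension must be manufactured from the rigid geometry of qubit channels, whose images are ellipsoids inside the Bloch ball. For $r=2$ one can span $\Im(\Lambda_s)$ by two states $\rho_1,\rho_2$ (endpoints of the image segment), write $\rho_i=\Lambda_s(\mu_i)$ with $\mu_i$ states, and read off from the bound above that $\|c_1\rho_1-c_2\rho_2\|_1\geq\|c_1\Lambda_{t,s}(\rho_1)-c_2\Lambda_{t,s}(\rho_2)\|_1$ for all real $c_1,c_2$; the Alberti--Uhlmann theorem then yields a CPTP $\Phi$ with $\Phi(\rho_i)=\Lambda_{t,s}(\rho_i)$, which, agreeing with $\Lambda_{t,s}$ on a spanning set, is the desired extension. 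The case $r=3$ requires a multi-state (or bipartite, exploiting the qubit ancilla already present in \eqref{CC}) version of the Alberti--Uhlmann extension result, and this is the main obstacle of the whole argument. One should not expect a dimension-free proof: the failure of the Alberti--Uhlmann condition for $d\geq3$ (Matsumoto \cite{Matsumoto}), and concretely the qutrit dynamical map constructed in the present paper, show that the step just described cannot survive beyond $\dim\mathcal{H}=2$, so any proof must use two-dimensionality in an essential way.
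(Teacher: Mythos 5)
First, a point of reference: the paper does not prove Theorem~\ref{TheoremCC} at all --- it is quoted from Chakraborty and Chru\'sci\'nski \cite{CC} --- so the only meaningful comparison is with that cited proof. Your ``only if'' direction is correct and standard, and the skeleton of the converse is the right one: integrate \eqref{CC} to global contractivity, deduce $\ker\Lambda_s\subseteq\ker\Lambda_t$, define the propagator $\Lambda_{t,s}$ on $\Im(\Lambda_s)$, observe that it is trace preserving and (after tensoring with $\mathds{1}$) positivity-preserving on $\Im(\Lambda_s)\otimes\mathfrak{H}(\mathcal{H})$, and reduce everything to the existence of a CPTP extension. The cases $\dim\Im(\Lambda_s)\in\{1,4\}$ are handled correctly (for $r=4$ the propagator is globally defined and $2$-positive, and $2$-positivity implies complete positivity for qubit maps), and the $r=2$ reduction to the Alberti--Uhlmann theorem is sound modulo routine checks that the image of the state set is a segment whose endpoint states span $\Im(\Lambda_s)$.

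The genuine gap is the one you name yourself: the case $\dim\Im(\Lambda_s)=3$ is left with no argument, and it is not a peripheral corner case --- it is where essentially the entire content of the theorem lives. None of the tools you invoke applies there: Arveson-type extension is unavailable because $\Im(\Lambda_s)$ need not contain the identity, there is no multi-state Alberti--Uhlmann theorem, and positivity of $\Lambda_{t,s}\otimes\mathds{1}$ on the subspace $\Im(\Lambda_s)\otimes\mathfrak{H}(\mathcal{H})$ does not by itself yield a completely positive extension to the full algebra. This is exactly the degenerate-image analysis to which \cite{CC} devotes most of its effort, using the specific geometry of qubit channels; and, as this very paper shows, the analogous step genuinely fails for $d=3$, so no soft argument can close it. Declaring the decisive step ``the main obstacle of the whole argument'' without resolving it means the proposal is a correct plan of attack, not a proof.
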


At this point, it is not known whether the equivalence between some kind of positive divisibility and unidirectional information flow conditions can be extended further than image nonincreasing dynamical maps and/or quantum systems of dimension larger than 2. In the following section we present a partial negative result, by constructing a dynamical map which satisfies condition \eqref{BLP3} but it is neither CP-divisible nor P-divisible.

\section{Counterexample}

Let $\{|1\rangle,|2\rangle,|3\rangle\}$ be an orthonormal basis of the Hilbert space of a qutrit, $d=3$. Consider the following CP maps written in terms of this basis:
\begin{align}
\mathcal{E}_1(X)&=\tfrac{1}{4}(X+D_1 X D_1+D_2 X D_2+ D_3 X D_3),\label{E1}\\
\mathcal{E}_2(X)&=K_2 X K_2^\dagger,\\
\mathcal{E}_3(X)&= \langle 1|X|1\rangle \rho_{A}+\langle 2|X|2\rangle\rho_{B},\\
\mathcal{E}_4(X)&= 2\langle 1|X|1\rangle |1\rangle \langle 1| + 2\langle 2|X|2\rangle |\theta \rangle \langle \theta |, \quad |\theta\rangle:=\ee^{\ii G \theta}|2\rangle
\end{align}
with 
\begin{equation}
D_1=\left(\begin{smallmatrix}
-1 &  &  \\
 & 1 &  \\
 &  & 1
\end{smallmatrix}\right), \quad D_2=\left(\begin{smallmatrix}
1 &  &  \\
 & -1 &  \\
 &  & 1
\end{smallmatrix}\right), \quad D_3=\left(\begin{smallmatrix}
1 &  &  \\
 & 1 &  \\
 &  & -1
\end{smallmatrix}\right), \quad K_2=\left(\begin{smallmatrix}
1 & 0 & 0 \\
0 & 1 & 1 \\
0 & 0 & 0
\end{smallmatrix}\right),
\end{equation} 
\begin{equation}
\rho_A=\frac12\left(\begin{smallmatrix}
1 &  &  \\
 & 0 &  \\
 &  & 1
\end{smallmatrix}\right), \quad \rho_B=\frac12\left(\begin{smallmatrix}
0 &  &  \\
 & 1 &  \\
 &  & 1
\end{smallmatrix}\right), \quad  \text{and}\quad G=\left(\begin{smallmatrix}
0 & -\ii & 0\\
\ii & 0 & 0\\
0 & 0 & 0
\end{smallmatrix}\right).
\end{equation}
Furthermore, consider also the following $\tau$-parametrized families of CP maps, with $\tau\in[0,1]$,
\begin{align}
\Gamma_\tau^{(1)}(X)&=\ee^{\int_{0}^\tau \mathcal{L}_{s} ds},\label{Gamma1}\\
\Gamma_\tau^{(2)}(X)&=\ee^{-f_1(\tau)}X+[1-\ee^{-f_1(\tau)}]\mathcal{E}_2(X),\label{Gamma2}\\
\Gamma_\tau^{(3)}(X)&=\ee^{-f_2(\tau)}X+[1-\ee^{-f_2(\tau)}]\mathcal{E}_3(X),\label{Gamma3}\\
\Gamma_\tau^{(4)}(X)&=(1+\tau^2)\Big[\langle 1|X|1\rangle |1\rangle \langle 1|+\langle 2|X|2\rangle \ee^{\ii G \theta \tau}|2\rangle \langle 2|\ee^{-\ii G \theta \tau}\Big] + (1-\tau^2)\big(\langle 1|X|1\rangle +\langle 2|X|2\rangle \big) |3\rangle \langle 3|,\label{Gamma4}
\end{align}
and the $s$-dependent GKLS generator \cite{GKLS}
\begin{equation}\label{Ls}
\mathcal{L}_s(X):=\gamma(s)[D_1 X D_1+D_2 X D_2+D_3 X D_3-3X ].\\
\end{equation}
In these definitions, $f_{1,2}(\tau)$ are derivable and monotonically increasing functions which satisfy
\begin{equation}\label{f12}
0=f_{1,2}(0)< f_{1,2}(\tau)< f_{1,2}(1)=\infty,
\end{equation}
and $\gamma(s)$  is a continuous, positive bounded function for $s\in(0,1)$, but singular at $s=1$.  It is easy to check that 
\begin{equation}\label{limGamma1}
\lim_{\tau\to1}\Gamma_{\tau}^{(1)}=\mathcal{E}_1 .
\end{equation}
Then,  we construct the following family of maps:
\begin{equation}\label{c-exem}
\Lambda_t=\begin{cases} \Gamma_{t/t_1}^{(1)}, & 0\leq t< t_1,\\
\Gamma_{(t-t_1)/(t_2-t_1)}^{(2)} \mathcal{E}_1  & t_1\leq t< t_2,\\
\Gamma_{(t-t_2)/(t_3-t_2)}^{(3)} \mathcal{E}_2\mathcal{E}_1  & t_2\leq t<t_3,\\
\Gamma_{(t-t_3)/(t_4-t_3)}^{(4)} \mathcal{E}_3\mathcal{E}_2\mathcal{E}_1 & t_3\leq t\leq t_4.
\end{cases}
\end{equation}
The specific values of $t_1$, $t_2$, $t_3$ and $t_4$ and the form of $\Lambda_t$ for $t>t_4$ is irrelevant in what follows. 

\begin{pro} The family $\{\Lambda_t\}_{t\in[0,t_4]}$ in \eqref{c-exem} is continuous.
\end{pro}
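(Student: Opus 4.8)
The plan is to use that $\mathcal{H}$ is finite dimensional, so that the space of linear maps on $\mathfrak{H}(\mathcal{H})$ carries a single topology and continuity of $t\mapsto\Lambda_t$ may be tested in any convenient norm. Since $\Lambda_t$ is defined by four pieces on the intervals $[0,t_1)$, $[t_1,t_2)$, $[t_2,t_3)$, $[t_3,t_4]$, I would reduce the statement to two things: (i) continuity of each piece on the interior of its interval, and (ii) agreement of the left limit at each of the three junctions $t_1,t_2,t_3$ with the value assigned by the following piece; continuity at $t=0$ and $t=t_4$ is one-sided and immediate.

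For (i): on $(0,t_1)$, $\Lambda_t=\Gamma^{(1)}_{t/t_1}$, and since $\gamma$ is continuous (hence locally integrable) on $[0,1)$ the operator-valued function $\tau\mapsto\int_0^\tau\mathcal{L}_s\,ds$ is $C^1$ there by the fundamental theorem of calculus, so composing with the entire map $\exp$ gives continuity of $\tau\mapsto\Gamma^{(1)}_\tau$, hence of $t\mapsto\Lambda_t$. On $(t_1,t_2)$, $(t_2,t_3)$, $(t_3,t_4)$ one has $\Lambda_t=\Gamma^{(2)}_\tau\mathcal{E}_1$, $\Gamma^{(3)}_\tau\mathcal{E}_2\mathcal{E}_1$, $\Gamma^{(4)}_\tau\mathcal{E}_3\mathcal{E}_2\mathcal{E}_1$ with $\tau$ affine in $t$; here $\Gamma^{(2)}_\tau=\mathcal{E}_2+\ee^{-f_1(\tau)}(\mathds{1}-\mathcal{E}_2)$ and $\Gamma^{(3)}_\tau=\mathcal{E}_3+\ee^{-f_2(\tau)}(\mathds{1}-\mathcal{E}_3)$ are continuous in $\tau$ because $f_{1,2}$ are differentiable on $[0,1)$, and $\Gamma^{(4)}_\tau$ is continuous because its coefficients are polynomials in $\tau$ and entries of $\ee^{\pm\ii G\theta\tau}$; composition with a fixed linear map preserves continuity.

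For (ii): at $t_1$ the left limit is $\lim_{\tau\uparrow1}\Gamma^{(1)}_\tau=\mathcal{E}_1$ by \eqref{limGamma1}, and $\Lambda_{t_1}=\Gamma^{(2)}_0\mathcal{E}_1=\mathcal{E}_1$ since $f_1(0)=0$ gives $\Gamma^{(2)}_0=\mathds{1}$. At $t_2$, $f_1(1)=\infty$ gives $\ee^{-f_1(\tau)}\to0$, so the left limit is $\mathcal{E}_2\mathcal{E}_1$, and $\Lambda_{t_2}=\Gamma^{(3)}_0\mathcal{E}_2\mathcal{E}_1=\mathcal{E}_2\mathcal{E}_1$ since $f_2(0)=0$ gives $\Gamma^{(3)}_0=\mathds{1}$. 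At $t_3$, $f_2(1)=\infty$ gives left limit $\mathcal{E}_3\mathcal{E}_2\mathcal{E}_1$; here $\Gamma^{(4)}_0\neq\mathds{1}$, so I would instead show that $\Gamma^{(4)}_0$ restricts to the identity on $\Im(\mathcal{E}_3)$. Since $\Im(\mathcal{E}_3)=\mathrm{span}\{\rho_A,\rho_B\}$ and $\Gamma^{(4)}_0(X)=\langle1|X|1\rangle|1\rangle\langle1|+\langle2|X|2\rangle|2\rangle\langle2|+\big(\langle1|X|1\rangle+\langle2|X|2\rangle\big)|3\rangle\langle3|$, a short computation using $\langle1|\rho_A|1\rangle=\langle3|\rho_A|3\rangle=\tfrac12$, $\langle2|\rho_A|2\rangle=0$ (and the symmetric facts for $\rho_B$) yields $\Gamma^{(4)}_0(\rho_A)=\rho_A$, $\Gamma^{(4)}_0(\rho_B)=\rho_B$; hence $\Gamma^{(4)}_0\mathcal{E}_3=\mathcal{E}_3$ and $\Lambda_{t_3}=\mathcal{E}_3\mathcal{E}_2\mathcal{E}_1$, matching. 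Together with (i) this gives continuity on $[0,t_4]$.

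The only step I expect to be more than bookkeeping is the matching at $t_3$: one must notice that the two sides coincide not because $\Gamma^{(4)}_0$ equals the identity (it does not) but because it acts as the identity on the two-dimensional image of $\mathcal{E}_3$ spanned by $\rho_A$ and $\rho_B$. Everything else is controlled by the continuity and boundary values of the scalar functions $\int_0^\tau\gamma$, $\ee^{-f_1(\tau)}$, $\ee^{-f_2(\tau)}$ and of the matrix $\ee^{\ii G\theta\tau}$, which follow directly from the hypotheses on $\gamma$ and $f_{1,2}$.
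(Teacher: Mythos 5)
Your proposal is correct and follows essentially the same route as the paper: check the three junction points, with the only nontrivial step being that $\Gamma^{(4)}_0$, while not the identity, fixes $\rho_A$ and $\rho_B$ and hence satisfies $\Gamma^{(4)}_0\mathcal{E}_3=\mathcal{E}_3$ — exactly the observation the paper makes. You are merely more explicit about continuity in the interior of each interval, which the paper dismisses as obvious.
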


\begin{proof} We will check continuity in $t=t_1,t_2,t_3$, as it clearly holds for the rest of the points.  From  \eqref{Gamma2},  \eqref{f12} and \eqref{limGamma1}, we have
\[
\lim_{t\uparrow t_1}\Lambda_t=\lim_{\tau\uparrow 1}\Gamma_{\tau}^{(1)}=\mathcal{E}_1=\lim_{\tau\downarrow 0}\Gamma_{\tau}^{(2)}\mathcal{E}_1=\lim_{t\downarrow t_1}\Lambda_t ,
\]
and, as a straightforward consequence of \eqref{Gamma2}, \eqref{Gamma3} and \eqref{f12},
\[
\lim_{t\uparrow t_2}\Lambda_t=\lim_{\tau\uparrow 1}\Gamma_{\tau}^{(2)}\mathcal{E}_1=\mathcal{E}_2\mathcal{E}_1=\lim_{\tau\downarrow 0}\Gamma_{\tau}^{(3)}\mathcal{E}_2\mathcal{E}_1=\lim_{t\downarrow t_2}\Lambda_t .
\]
Furthermore, 
\[
\lim_{\tau\downarrow 0}\Gamma_{\tau}^{(4)}=\Gamma_{0}^{(4)}=\langle 1|X|1\rangle (|1\rangle \langle 1|+|3\rangle \langle 3|)+\langle 2|X|2\rangle (|2\rangle \langle 2|+|3\rangle \langle 3|),
\]
and we obtain 
\[
\Gamma_{0}^{(4)}(\rho_A)=\rho_A \quad \text{and} \quad \Gamma_{0}^{(4)}(\rho_B)=\rho_B.
\]
Therefore, $\Gamma_{0}^{(4)}\mathcal{E}_3=\mathcal{E}_3$, and continuity in $t_3$ also holds
\[
\lim_{t\uparrow t_3}\Lambda_t=\lim_{\tau\uparrow 1}\Gamma_{\tau}^{(3)}\mathcal{E}_2\mathcal{E}_1=\mathcal{E}_3\mathcal{E}_2\mathcal{E}_1=\lim_{\tau\downarrow 0}\Gamma_{\tau}^{(4)}\mathcal{E}_3\mathcal{E}_2\mathcal{E}_1=\lim_{t\downarrow t_3}\Lambda_t.
\]
\end{proof}

\begin{pro} The family $\{\Lambda_t\}_{t\in[0,t_4]}$ in \eqref{c-exem} is trace preserving.
\end{pro}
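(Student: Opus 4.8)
The plan is to use the nested structure of \eqref{c-exem}: on each of the four time windows $\Lambda_t$ is the composition of one of the families $\Gamma_\tau^{(j)}$ with a fixed ``history'' channel $\mathcal{E}_{j-1}\cdots\mathcal{E}_1$, so it suffices to show that each factor preserves the trace \emph{on the image of the factors standing to its right}. This caveat is essential, since $\mathcal{E}_2$, $\mathcal{E}_3$ and the family $\Gamma_\tau^{(4)}$ are \emph{not} trace preserving on all of $\mathfrak{H}(\mathcal{H})$; the whole argument amounts to tracking the image of each partial composition.

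First I would dispose of the building blocks that are trace preserving outright. Each $D_i$ is Hermitian and unitary, so $\Tr[D_iXD_i]=\Tr X$ and the generator \eqref{Ls} is trace annihilating, $\Tr[\mathcal{L}_s(X)]=0$; hence $\Gamma_\tau^{(1)}$, being the propagator of the GKLS master equation with generator $\mathcal{L}_s$, is CPTP for every $\tau\in[0,1]$, which settles $0\le t<t_1$. A direct evaluation of the averages in \eqref{E1} shows that every off-diagonal entry cancels, so $\mathcal{E}_1$ is the completely dephasing channel $\mathcal{E}_1(X)=\mathrm{diag}(X_{11},X_{22},X_{33})$: it is trace preserving, and its image consists of diagonal matrices. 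This diagonal structure is what makes the remaining factors well behaved.

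Next I would propagate the computation through $\mathcal{E}_2$ and $\mathcal{E}_3$. On a diagonal matrix $Y=\mathrm{diag}(y_1,y_2,y_3)$ one has $\mathcal{E}_2(Y)=K_2YK_2^\dagger=\mathrm{diag}(y_1,y_2+y_3,0)$, so $\Tr[\mathcal{E}_2(Y)]=\Tr Y$ and, in particular, $\mathcal{E}_2\mathcal{E}_1(X)=\mathrm{diag}(X_{11},X_{22}+X_{33},0)$. On matrices of the latter form $\mathcal{E}_3$ produces $X_{11}\rho_A+(X_{22}+X_{33})\rho_B$, whose trace equals $X_{11}+X_{22}+X_{33}=\Tr X$ because $\Tr\rho_A=\Tr\rho_B=1$. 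Since $\Gamma_\tau^{(2)}$ is a convex combination of the identity channel and $\mathcal{E}_2$, and $\Gamma_\tau^{(3)}$ a convex combination of the identity channel and $\mathcal{E}_3$, both are trace preserving on, respectively, diagonal matrices and matrices with vanishing $(3,3)$ entry; hence $\Gamma_\tau^{(2)}\mathcal{E}_1$ and $\Gamma_\tau^{(3)}\mathcal{E}_2\mathcal{E}_1$ are trace preserving, settling $t_1\le t<t_2$ and $t_2\le t<t_3$.

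The only genuinely delicate window is $t_3\le t\le t_4$, and I expect it to be the main obstacle. Here $\mathcal{E}_3\mathcal{E}_2\mathcal{E}_1(X)=\tfrac12\,\mathrm{diag}(X_{11},X_{22}+X_{33},X_{11}+X_{22}+X_{33})=:Z$, so $\langle1|Z|1\rangle+\langle2|Z|2\rangle=\tfrac12(X_{11}+X_{22}+X_{33})$. Now $\Gamma_\tau^{(4)}$ in \eqref{Gamma4} is \emph{not} trace preserving in general (its output trace is $2(X_{11}+X_{22})$ on an arbitrary $X$), but when applied to $Z$ the factor $\ee^{\ii G\theta\tau}|2\rangle\langle2|\ee^{-\ii G\theta\tau}$ is a rank-one projector and the $(1+\tau^2)$ and $(1-\tau^2)$ contributions combine to $2(\langle1|Z|1\rangle+\langle2|Z|2\rangle)=\Tr X$, independently of $\tau$. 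This finishes $t_3\le t\le t_4$ and the proof. The take-away is that each step forces one to verify trace preservation only on the progressively smaller image of the preceding maps, and the cancellation of the $\tau$-dependence between the two summands of $\Gamma_\tau^{(4)}$ on that image is the one point requiring genuine care.
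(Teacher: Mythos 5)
Your proposal is correct and follows essentially the same route as the paper: both arguments reduce trace preservation on each time window to showing that $\Gamma_\tau^{(j+1)}$ is trace preserving on the image of the fixed composition $\mathcal{E}_j\cdots\mathcal{E}_1$, using the explicit forms $\mathcal{E}_1(X)=\mathrm{diag}(x_{11},x_{22},x_{33})$, $\mathcal{E}_2\mathcal{E}_1(X)=\mathrm{diag}(x_{11},x_{22}+x_{33},0)$ and $\mathcal{E}_3\mathcal{E}_2\mathcal{E}_1(X)=\tfrac12\mathrm{diag}(x_{11},x_{22}+x_{33},x_{11}+x_{22}+x_{33})$, and the cancellation of the $\tau$-dependence in the trace of $\Gamma_\tau^{(4)}$. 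The only (harmless) addition is that you spell out why $\mathcal{L}_s$ is trace annihilating, which the paper simply asserts via its GKLS form.
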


\begin{proof} $\Lambda_t$ is clearly trace preserving for $0\leq t\leq t_1$, as $\mathcal{L}_s$ in \eqref{Ls} is a (time-dependent) GKSL form. In addition, we just note that $\Gamma_{\tau}^{(i+1)}$ is trace preserving on $\Im(\Lambda_{t_i})$ for $i=1,2,3$, respectively. More specifically,  for some arbitrary operator $X\in\mathfrak{H}(\mathcal{H})$,
\begin{equation}\label{Xfull}
X=\begin{pmatrix}
 x_{11} & x_{12} & x_{13}\\
 x_{21} & x_{22} & x_{23} \\
  x_{31} & x_{32} & x_{33}
\end{pmatrix} ,
\end{equation}
we have
\begin{equation}\label{Xt1}
\Lambda_{t_1}(X)=\mathcal{E}_1(X)=\begin{pmatrix}
x_{11} &      0        &0 \\
    0        & x_{22} & 0\\
    0        &       0       & x_{33}
\end{pmatrix}.
\end{equation}
For  $t_1\leq t < t_2$, $\Lambda_{t}(X)=\Gamma^{(2)}_{(t-t_1)/(t_2-t_1)}\mathcal{E}_1(X)$,  with $\Gamma^{(2)}_{(t-t_1)/(t_2-t_1)}$ a convex combination of the identity and $\mathcal{E}_2$. Since $\mathcal{E}_2$ is trace-preserving on $\Im(\mathcal{E}_1)$, as
\begin{equation}\label{Xt2}
\mathcal{E}_2\mathcal{E}_1(X)=\begin{pmatrix}
 x_{11} & 0 & 0\\
 0 & x_{22}+x_{33} & 0\\
 0  & 0   & 0
\end{pmatrix} ,
\end{equation}
it turns out that $\Lambda_{t}$ is trace preserving for $t_1\leq t < t_2$ too.  In a completely similar fashion one proves trace preservation for $t_2\leq t < t_3$.  Finally,  we end up with
\begin{equation}\label{Xt3}
\Lambda_{t_3}(X)=\mathcal{E}_3\mathcal{E}_2\mathcal{E}_1(X)=\frac12\begin{pmatrix}
 x_{11} & 0 & 0\\
 0 & x_{22}+x_{33} & 0\\
 0  & 0   & x_{11}+x_{22}+x_{33}
\end{pmatrix} ,
\end{equation}
so that using \eqref{Gamma4} for $t_3\leq t \leq t_4$,  
\[ 
\Tr[\Lambda_{t}(X)]=\Tr\big[\Gamma^{(4)}_{(t-t_3)/(t_4-t_3)}\mathcal{E}_3\mathcal{E}_2\mathcal{E}_1(X)\big]= 2\langle 1| \mathcal{E}_3\mathcal{E}_2\mathcal{E}_1(X)| 1\rangle+2\langle 2| \mathcal{E}_3\mathcal{E}_2\mathcal{E}_1(X)| 2\rangle=x_{11}+x_{22}+x_{33}=\Tr(X).
\]
\end{proof}

On the other hand, $\Lambda_t$ is CP for any $t$, as it is a composition of CP maps. Thus, we arrive at the following corollary.

\begin{cor} The family $\{\Lambda_t\}_{t\in[0,t_4]}$ in \eqref{c-exem} forms a dynamical map.
\end{cor}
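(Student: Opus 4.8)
The plan is to check, one by one, the three defining conditions in the Definition of a dynamical map for the family $\{\Lambda_t\}_{t\in[0,t_4]}$: complete positivity and trace preservation of each $\Lambda_t$, continuity of $t\mapsto\Lambda_t$, and the normalization $\Lambda_0=\mathds{1}$. Since the bulk of the work has already been done in the two preceding Propositions, the corollary is obtained simply by collecting those facts together with two short additional observations.

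First, continuity is exactly the statement of the first Proposition: on each open piece $(0,t_1)$, $(t_1,t_2)$, $(t_2,t_3)$, $(t_3,t_4]$ it is clear from the smooth parameter dependence of the building blocks $\Gamma_\tau^{(i)}$ and the fixed CP maps $\mathcal{E}_j$, and the one-sided limits at the gluing instants $t_1$, $t_2$, $t_3$ were matched there using \eqref{limGamma1}, the boundary behaviour \eqref{f12} of $f_{1,2}$, and the identities $\Gamma_0^{(4)}(\rho_A)=\rho_A$ and $\Gamma_0^{(4)}(\rho_B)=\rho_B$. Second, trace preservation is the content of the second Proposition. Third, complete positivity: by the definition \eqref{c-exem} every $\Lambda_t$ is a composition of maps drawn from $\{\Gamma_\tau^{(1)},\dots,\Gamma_\tau^{(4)}\}$ and $\{\mathcal{E}_1,\mathcal{E}_2,\mathcal{E}_3\}$, each of which is CP --- $\Gamma_\tau^{(1)}$ as the CPTP propagator generated by the time-dependent GKLS form \eqref{Ls} (whose generators commute, being simultaneously diagonal); $\Gamma_\tau^{(2)}$ and $\Gamma_\tau^{(3)}$ as convex combinations of the identity with the CP maps $\mathcal{E}_2$, $\mathcal{E}_3$; $\Gamma_\tau^{(4)}$ as a sum of the manifestly CP maps $X\mapsto\langle i|X|i\rangle|v\rangle\langle v|$ with nonnegative coefficients $1\pm\tau^2$ for $\tau\in[0,1]$; and $\mathcal{E}_1,\mathcal{E}_2,\mathcal{E}_3$ by inspection of their Kraus forms --- and complete positivity is stable under composition. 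Combined with trace preservation, each $\Lambda_t$ is therefore CPTP.

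It remains only to note that $\Lambda_0=\Gamma_0^{(1)}=\exp\!\big(\int_0^0\mathcal{L}_s\,ds\big)=\exp(0)=\mathds{1}$, so all three conditions of the Definition hold and $\{\Lambda_t\}_{t\in[0,t_4]}$ is a dynamical map. I do not expect any genuine obstacle here: the single nontrivial ingredient, continuity across the splicing times $t_1$, $t_2$, $t_3$, has already been settled in the first Proposition, while complete positivity and trace preservation are either routine or already established, so the corollary is essentially immediate.
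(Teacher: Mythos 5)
Your proposal is correct and follows essentially the same route as the paper: the corollary is obtained by combining the two preceding Propositions (continuity and trace preservation) with the observation that each $\Lambda_t$ is a composition of CP maps, hence CP. Your additional remarks (the explicit check $\Lambda_0=\mathds{1}$, and the case-by-case justification that each $\Gamma_\tau^{(i)}$ and $\mathcal{E}_j$ is CP) only make explicit what the paper leaves implicit.
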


\begin{pro} The dynamical map in \eqref{c-exem} is not P-divisible.
\end{pro}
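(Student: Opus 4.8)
The plan is to refute P-divisibility by producing a single pair $s<t$ for which no positive, trace-preserving map $\Lambda_{t,s}$ can satisfy $\Lambda_t=\Lambda_{t,s}\Lambda_s$. I would take $s=t_3$ and some $t\in(t_3,t_4)$, writing $\tau:=(t-t_3)/(t_4-t_3)\in(0,1)$ and choosing it so that $\cos(\theta\tau)\neq0$ (all but finitely many $\tau$ qualify). From \eqref{Xt3}, $\Lambda_{t_3}(|1\rangle\langle1|)=\rho_A$ and $\Lambda_{t_3}(|2\rangle\langle2|)=\rho_B$, so any hypothetical propagator $\Phi:=\Lambda_{t,t_3}$ must obey $\Phi(\rho_A)=\Lambda_t(|1\rangle\langle1|)=\Gamma_\tau^{(4)}(\rho_A)$ and $\Phi(\rho_B)=\Lambda_t(|2\rangle\langle2|)=\Gamma_\tau^{(4)}(\rho_B)$; a one-line substitution in \eqref{Gamma4} yields
\[
v_1:=\Gamma_\tau^{(4)}(\rho_A)=\tfrac{1+\tau^2}{2}|1\rangle\langle1|+\tfrac{1-\tau^2}{2}|3\rangle\langle3|,\qquad v_2:=\Gamma_\tau^{(4)}(\rho_B)=\tfrac{1+\tau^2}{2}|\theta_\tau\rangle\langle\theta_\tau|+\tfrac{1-\tau^2}{2}|3\rangle\langle3|,
\]
where $|\theta_\tau\rangle=\ee^{\ii G\theta\tau}|2\rangle=\sin(\theta\tau)|1\rangle+\cos(\theta\tau)|2\rangle$.

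The key step, which is a Matsumoto-type obstruction to the Alberti--Uhlmann condition in dimension three, is to test $\Phi$ against the auxiliary state $\omega:=|3\rangle\langle3|$. Since $2\rho_A-\omega=\mathrm{diag}(1,0,0)\geq0$ and $2\rho_B-\omega=\mathrm{diag}(0,1,0)\geq0$, positivity of $\Phi$ forces $0\leq\Phi(\omega)\leq2v_1$ and $0\leq\Phi(\omega)\leq2v_2$, and trace preservation forces $\Tr\Phi(\omega)=1$. Using that $0\leq X\leq Y$ implies $\mathrm{supp}\,X\subseteq\mathrm{supp}\,Y$, I would then conclude that $\mathrm{supp}\,\Phi(\omega)\subseteq\mathrm{span}\{|1\rangle,|3\rangle\}\cap\mathrm{span}\{|\theta_\tau\rangle,|3\rangle\}$, which equals $\mathrm{span}\{|3\rangle\}$ because $\cos(\theta\tau)\neq0$ places $|\theta_\tau\rangle$ outside $\mathrm{span}\{|1\rangle,|3\rangle\}$. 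Hence $\Phi(\omega)=|3\rangle\langle3|$ by the trace condition, and inserting this into $\Phi(\omega)\leq2v_1$ and comparing the $\langle3|\cdot|3\rangle$ entries gives $1\leq1-\tau^2$, which is false for $\tau>0$. This contradiction establishes that $\{\Lambda_t\}$ is not P-divisible (and therefore not CP-divisible).

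I do not anticipate a serious obstacle: the computations of $\Gamma_\tau^{(4)}$ on $\rho_A,\rho_B$ and the two matrix inequalities are short diagonal (plus one rank-one) calculations. The two points that need care are (i) identifying the right mechanism --- order-monotonicity of positive maps applied to a state dominated by multiples of two inputs whose images have nearly disjoint supports --- rather than hunting for a trace-norm violation, which in fact cannot occur here; and (ii) the support bookkeeping, in particular restricting to $\tau\in(0,1)$ so that $\mathrm{supp}\,v_1$ is genuinely two-dimensional and $|\theta_\tau\rangle\notin\mathrm{span}\{|1\rangle,|3\rangle\}$. For robustness I would also remark that the identical argument runs with both $s,t$ in the open interval $(t_3,t_4)$: putting $\sigma:=(s-t_3)/(t_4-t_3)$ and using $\omega\leq\tfrac{2}{1-\sigma^2}\Lambda_s(|j\rangle\langle j|)$ for $j=1,2$ one is driven to the impossible inequality $\tau\leq\sigma$, so P-divisibility fails everywhere on the final segment.
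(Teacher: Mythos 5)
Your proof is correct, but it takes a genuinely different route from the paper's. The paper works only with the endpoints $s=t_3$, $t=t_4$, where the construction degenerates maximally: the propagator would have to send the mixed states $\rho_A,\rho_B$ to the pure states $|1\rangle\langle 1|$ and $|\theta\rangle\langle\theta|$, and a positive trace-preserving map sending a mixed state to a pure state must send every rank-one projector in its support to that same pure state; since $|3\rangle\langle 3|$ lies in the support of both $\rho_A$ and $\rho_B$, this forces $|1\rangle\langle 1|=|\theta\rangle\langle\theta|$, impossible for $\theta\neq(2n+1)\pi/2$. You instead run an order-monotonicity/support argument at an intermediate time, using the witness $\omega=|3\rangle\langle 3|$, which is dominated by $2\rho_A$ and $2\rho_B$ while the images $2v_1,2v_2$ have supports meeting only in $\mathrm{span}\{|3\rangle\}$; the $\langle 3|\cdot|3\rangle$ entry then gives the contradiction $1\leq 1-\tau^2$. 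Your computations of $\Gamma_\tau^{(4)}(\rho_A)$, $\Gamma_\tau^{(4)}(\rho_B)$ and the dominations $2\rho_{A,B}-\omega\geq 0$ all check out. Both arguments are Matsumoto-type obstructions, but yours buys strictly more: it rules out a positive trace-preserving propagator between \emph{any} pair $t_3\leq s<t< t_4$, not just the endpoints, and it still works at $\theta=\pi/2$ (where $|\theta\rangle=|1\rangle$ and the paper's endpoint argument yields no contradiction, even though Proposition \ref{MainProposition} admits that value). The paper's version is shorter, needing no evaluation of $\Gamma_\tau^{(4)}$ at intermediate $\tau$ and no support-intersection bookkeeping. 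Your side remark that a trace-norm violation cannot be the mechanism here is exactly the point of the paper, confirmed by Proposition \ref{MainProposition}.
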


\begin{proof} Indeed, note that 
\begin{equation}
\begin{cases}
\Lambda_{t_3}(|1\rangle\langle 1|)=\mathcal{E}_3\mathcal{E}_2\mathcal{E}_1(|1\rangle \langle 1|)=\rho_A,\\
\Lambda_{t_3}(|2\rangle\langle 2|)=\mathcal{E}_3\mathcal{E}_2\mathcal{E}_1(|2\rangle \langle 2|)=\rho_B,
\end{cases}
\end{equation}
and
\begin{equation}
\begin{cases}
\Lambda_{t_4}(|1\rangle\langle 1|)=\Gamma^{(4)}_1\mathcal{E}_3\mathcal{E}_2\mathcal{E}_1(|1\rangle \langle 1|)=\Gamma^{(4)}_1(\rho_A)=|1\rangle \langle 1|,\\
\Lambda_{t_4}(|2\rangle\langle 2|)=\Gamma^{(4)}_1\mathcal{E}_3\mathcal{E}_2\mathcal{E}_1(|2\rangle \langle 2|)=\Gamma^{(4)}_1(\rho_B)=|\theta\rangle \langle \theta|.
\end{cases}
\end{equation}
Therefore, the positive and trace preserving map $\Lambda_{t,s}$ in \eqref{Divisibledecomposition} for $t=t_4$ and $s=t_3$ must satisfy 
\begin{equation}
\begin{cases}
\Lambda_{t_4,t_3}(\rho_A)=|1\rangle \langle 1|,\\
\Lambda_{t_4,t_3}(\rho_B)=|\theta\rangle \langle \theta|.
\end{cases}
\end{equation}
Since both final states are pure but $\rho_A$ and $\rho_B$ are not,  the positivity and trace preserving condition imposes that $\Lambda_{t_4,t_3}$ has to map all rank one projections in the support of $\rho_A$ into $|1\rangle \langle 1|$, and all in the support of $\rho_B$ into $|\theta\rangle \langle \theta|$.  However,  $|3\rangle \langle 3|$ is a rank one projector in the support of both $\rho_A$ and $\rho_B$, hence such a positive a trace preserving map $\Lambda_{t_4,t_3}$ does not exist unless $|1\rangle \langle 1|=|\theta\rangle \langle \theta|$, which is not true for $\theta\neq (2n+1)\pi/2$ with $n\in\mathds{N}$.

\end{proof}

Despite the dynamical map in \eqref{c-exem} is not P-divisible (so neither CP-divisible), it presents an unidirectional flow of information for some values of $\theta$.

\begin{pro} \label{MainProposition} If $\theta\in[\sqrt{2},\pi/2]$,  the dynamical map $\{\Lambda_t\}_{t\in[0,t_4]}$ in \eqref{c-exem} is monotonically contractive 
\begin{equation}\label{monNorm}
\frac{d}{dt}\|\Lambda_t X\|_1\leq 0, \quad X\in\mathfrak{H}(\mathcal{H}).
\end{equation}
\end{pro}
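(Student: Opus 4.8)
The plan is to verify \eqref{monNorm} on each of the four windows of \eqref{c-exem}, the right derivative at a break point $t_i$ being governed by the branch active on $[t_i,t_{i+1})$. On $0\le t<t_1$ the family is $\Gamma^{(1)}_\tau=\ee^{\int_0^\tau\mathcal L_s\,ds}$, generated at each instant by the GKLS form $\mathcal L_s$ (recall $\gamma(s)\ge0$ on $(0,1)$), hence CP-divisible on this window: writing $\Lambda_t=\Lambda_{t,s}\Lambda_s$ with $\Lambda_{t,s}$ CPTP gives $\|\Lambda_t X\|_1\le\|\Lambda_s X\|_1$, which is \eqref{monNorm} there. On $t_1\le t<t_2$ and $t_2\le t<t_3$ one simply evaluates $\Lambda_t(X)$ from \eqref{Gamma2}--\eqref{Gamma3} and \eqref{Xt1}--\eqref{Xt2}: it stays diagonal, with entries affine in a single parameter that increases with $t$. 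With $p:=1-\ee^{-f_1}\in[0,1)$ increasing, $\Lambda_t(X)=\mathrm{diag}\big(x_{11},\,x_{22}+p\,x_{33},\,(1-p)x_{33}\big)$ on the second window, so $\frac{d}{dp}\|\Lambda_t X\|_1=\mathrm{sgn}(x_{22}+p\,x_{33})\,x_{33}-|x_{33}|\le0$; and with $r:=\tfrac12(1-\ee^{-f_2})\in[0,\tfrac12)$ increasing, $\Lambda_t(X)=\mathrm{diag}\big((1-r)x_{11},\,(1-r)(x_{22}+x_{33}),\,r(x_{11}+x_{22}+x_{33})\big)$ on the third, so $\frac{d}{dr}\|\Lambda_t X\|_1=|x_{11}+x_{22}+x_{33}|-|x_{11}|-|x_{22}+x_{33}|\le0$ by the triangle inequality.

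All the work is in the last window $t_3\le t\le t_4$, the one on which $P$-divisibility fails. Setting $\tau:=(t-t_3)/(t_4-t_3)\in[0,1]$, $u:=\tfrac12 x_{11}$, $v:=\tfrac12(x_{22}+x_{33})$ and $|\phi_\tau\rangle:=\ee^{\ii G\theta\tau}|2\rangle=\sin(\theta\tau)|1\rangle+\cos(\theta\tau)|2\rangle$, formulas \eqref{Gamma4} and \eqref{Xt3} give
\[
\Lambda_t(X)=(1+\tau^2)\big(u\,|1\rangle\langle 1|+v\,|\phi_\tau\rangle\langle\phi_\tau|\big)+(1-\tau^2)(u+v)\,|3\rangle\langle 3|,
\]
which is block diagonal with respect to $\mathcal H=\mathrm{span}\{|1\rangle,|2\rangle\}\oplus\mathrm{span}\{|3\rangle\}$. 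Since $1-\tau^2\ge0$ on this window, the trace norm splits,
\[
\|\Lambda_t X\|_1=(1+\tau^2)\,\|M_\tau\|_1+(1-\tau^2)\,|u+v|,\qquad M_\tau:=u\,|1\rangle\langle 1|+v\,|\phi_\tau\rangle\langle\phi_\tau|.
\]
The $2\times2$ Hermitian matrix $M_\tau$ satisfies $\Tr M_\tau=u+v$ and $\det M_\tau=uv\cos^2(\theta\tau)$. If $uv\ge0$ then $\|M_\tau\|_1=|u+v|$ is independent of $\tau$ and $\|\Lambda_t X\|_1=2|u+v|$ is constant, so \eqref{monNorm} holds trivially. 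If $uv<0$ then $\|M_\tau\|_1=\sqrt{(u+v)^2+4|uv|\cos^2(\theta\tau)}=:R\ge A:=|u+v|$.

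In the remaining case $uv<0$ I differentiate, $\frac{d}{d\tau}\|\Lambda_t X\|_1=2\tau(R-A)+(1+\tau^2)R'$, use $R-A=(R^2-A^2)/(R+A)$, and note that the hypotheses force $\theta\tau\in[0,\pi/2]$ (since $\theta\le\pi/2$ and $\tau\le1$), so that $R'=-4|uv|\,\theta\cos(\theta\tau)\sin(\theta\tau)/R\le0$ and the condition $\frac{d}{d\tau}\|\Lambda_t X\|_1\le0$ reduces, when $\cos(\theta\tau)>0$ (the case $\cos(\theta\tau)=0$ being immediate), to
\[
\frac{2R}{R+A}\,\tau\cos(\theta\tau)\ \le\ (1+\tau^2)\,\theta\sin(\theta\tau).
\]
Bounding $R/(R+A)\le1$ eliminates all dependence on $X$ and leaves $2\tau\cos(\theta\tau)\le(1+\tau^2)\theta\sin(\theta\tau)$, i.e.\ $\frac{2\tau}{1+\tau^2}\le\theta\tan(\theta\tau)$ for $\tau>0$; and since $\tan x\ge x$ on $[0,\pi/2)$ one gets $\theta\tan(\theta\tau)\ge\theta^2\tau\ge2\tau\ge\frac{2\tau}{1+\tau^2}$ whenever $\theta\ge\sqrt2$, which closes the argument.

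The hard part is this last window. Two things must be gotten right: first, recognising that the $|3\rangle\langle 3|$ channel decouples so that $\|\Lambda_t X\|_1$ is a fixed scalar plus $(1+\tau^2)\|M_\tau\|_1$, and controlling the spectrum of $M_\tau$ through its trace and determinant with the sign dichotomy $uv\gtrless0$; and second, the final scalar estimate — after removing the $X$-dependence via $R/(R+A)\le1$ everything comes down to $\theta^2\ge 2/(1+\tau^2)$, whose worst case $\tau\to0$ is exactly what forces (and essentially sharply) the lower bound $\theta\ge\sqrt2$, while the upper bound $\theta\le\pi/2$ is what keeps $\theta\tau$ in the first quadrant so that the information-backflow term $(1+\tau^2)R'$ in $\frac{d}{d\tau}\|\Lambda_t X\|_1$ retains the contracting (nonpositive) sign.
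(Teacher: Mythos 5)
Your proof is correct, and while the first three windows are handled essentially as in the paper (CP-divisibility on $[0,t_1)$; explicit diagonal formulas, affine in a monotone parameter, plus the triangle inequality on $[t_1,t_2)$ and $[t_2,t_3)$), your treatment of the decisive last window takes a genuinely different route. The paper reduces to operators $\rho_A-\lambda\rho_B$, computes $\|\Gamma^{(4)}_\tau(\rho_A-\lambda\rho_B)\|_1=\tfrac12\big[(1-\tau^2)|\lambda-1|+(1+\tau^2)\sqrt{1+\lambda^2+2\lambda\cos(2\theta\tau)}\big]$, and then needs a case split in $\lambda$ (with $0<\lambda<1$ mapped to $\lambda\ge1$ via $\lambda\to1/\lambda$), a monotonicity-in-$\lambda$ argument to reduce to $\lambda=1$, a half-angle identity, the Taylor bound $\sin\alpha\ge\alpha-\tfrac16\alpha^3$, and a sign analysis of the resulting polynomial $(2-\theta^2)\tau-(\theta^2-\tfrac13\theta^4)\tau^3$ to extract $\theta\ge\sqrt2$. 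You instead keep a general $X$, exploit the block structure $\mathrm{span}\{|1\rangle,|2\rangle\}\oplus\mathrm{span}\{|3\rangle\}$ to split the norm as $(1+\tau^2)\|M_\tau\|_1+(1-\tau^2)|u+v|$, control $\|M_\tau\|_1$ through its trace and determinant (your dichotomy $uv\gtrless0$ is exactly the paper's $\lambda\lessgtr0$, since $u=\tfrac12$, $v=-\tfrac{\lambda}{2}$ and $\cos(2\theta\tau)=2\cos^2(\theta\tau)-1$ translate one formula into the other), and then the identity $R-A=(R^2-A^2)/(R+A)$ cancels a factor of $\cos(\theta\tau)$ so that the crude bound $R/(R+A)\le1$ strips all $X$-dependence and leaves only $\tfrac{2\tau}{1+\tau^2}\le\theta\tan(\theta\tau)$, settled by $\tan x\ge x$. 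This is cleaner: it avoids the Taylor/polynomial step and the $\lambda$-monotonicity lemma entirely, and it makes transparent where both endpoints of $[\sqrt2,\pi/2]$ come from ($\theta^2\ge2/(1+\tau^2)$ in the worst case $\tau\to0$ for the lower end; keeping $\theta\tau$ in the first quadrant so that $R'\le0$ for the upper end), consistent with the paper's closing remark that the interval is tight. The only points worth stating explicitly in a polished write-up are the convention that kinks of $p\mapsto|x_{22}+p\,x_{33}|$ are handled by right derivatives (the paper's stated convention, under which your window-two bound still closes) and the degenerate cases $\cos(\theta\tau)=0$ and $u+v=0$, both of which your formulas cover.
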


\begin{proof} We shall analyze every interval separately: 
\begin{itemize}
\item For $0\leq t< t_1$,  $\Lambda_t$ is given by an invertible CP-divisible map, as $\gamma(s)$  is taken to be a positive bounded function, so \eqref{monNorm} follows \cite{CKR}.
\item For $t_1\leq t< t_2$,  using \eqref{Xt1} and \eqref{Gamma2}, we find
\begin{equation}
\|\Lambda_{t} X\|_1=|x_{11}|+\big|x_{22}+\big[1-\ee^{-f_1(t)}\big]x_{33}\big|+\ee^{-f_1(t)}|x_{33}|.
\end{equation}
Here,  $f_1[(t-t_1)/(t_2-t_1)]$ has been rewritten as $f_1(t)$,  to avoid unnecessarily nasty expressions.  Therefore, for some small $\epsilon$,
\begin{align}
\|\Lambda_{t+\epsilon} X\|_1-\|\Lambda_{t} X\|_1&=\big|x_{22}+\big[1-\ee^{-f_1(t+\epsilon)}\big]x_{33}\big|-\big|x_{22}+\big[1-\ee^{-f_1(t)}\big]x_{33}\big|+\big[\ee^{-f_1(t+\epsilon)}-\ee^{-f_1(t)}\big]|x_{33}| \nonumber\\
&=\big|x_{22}+\big\{1-\ee^{-f_1(t)}[1-f'(t)\epsilon+\mathcal{O}(\epsilon^2)] \big\}x_{33}\big| \nonumber\\
-\big|x_{22}+\big[1-\ee^{-f_1(t)}\big]x_{33}\big| +[-f_1'(t)\ee^{-f_1(t)}\epsilon+\mathcal{O}(\epsilon^2)]|x_{33}| \span\omit \nonumber\\
&\leq\big|\ee^{-f_1(t)}f'(t)\epsilon x_{33}\big|-f_1'(t)\ee^{-f_1(t)}\epsilon|x_{33}|+\mathcal{O}(\epsilon^2)= \mathcal{O}(\epsilon^2),
\end{align}
where the bound follows from the triangle inequality and we have used that  $f_1(t)$ is monotonically increasing (so $f'_1(t)\geq0$) in the last step.  Thus,  
\begin{equation}
\lim_{\epsilon\downarrow 0} \frac{\|\Lambda_{t+\epsilon} X\|_1-\|\Lambda_{t} X\|_1}{\epsilon}\leq 0,
\end{equation}
and \eqref{monNorm} follows. 
\item For $t_2\leq t< t_3$,  using \eqref{Xt2} and \eqref{Gamma3}, we obtain
\begin{equation}
\|\Lambda_{t} X\|_1=\frac{1+\ee^{-f_2(t)}}{2}(|x_{11}|+|x_{22}+x_{33}|)+\frac{1-\ee^{-f_2(t)}}{2}|x_{11}+x_{22}+x_{33}|,
\end{equation}
where we have used again the shortcut notation $f_2(t)$ for $f_2[(t-t_2)/(t_3-t_2)]$ .  The condition \eqref{monNorm} is now satisfied because
\begin{align}
\|\Lambda_{t+\epsilon} X\|_1-\|\Lambda_{t} X\|_1&=\frac{\ee^{-f_2(t+\epsilon)}-\ee^{-f_2(t)}}{2}(|x_{11}|+|x_{22}+x_{33}|-|x_{11}+x_{22}+x_{33}|)\nonumber\\
&=\frac{\epsilon f_2'(t)\ee^{-f_2(t)}}{2}(|x_{11}+x_{22}+x_{33}|-|x_{11}|-|x_{22}+x_{33}|)+\mathcal{O}(\epsilon^2)\leq \mathcal{O}(\epsilon^2),
\end{align}
as a consequence of $f_2'(t)\geq 0$ and the triangle inequality, similarly as before.
\item For $t_3\leq t\leq t_4$,  the situation is more complicated.  First of all, since $\Im[\mathcal{E}_3\mathcal{E}_2\mathcal{E}_1]$ is spanned by $\rho_A$ and $\rho_B$,  we have to prove contractivity just for operators with the form $X= (\rho_A-\lambda\rho_B)$.  Indeed, linearity of $\Gamma_\tau^{(4)}$ leads to
\begin{equation}
\|\Lambda_{t} X\|_1=\|\Gamma^{(4)}_{(t-t_3)/(t_4-t_3)}\mathcal{E}_3\mathcal{E}_2\mathcal{E}_1(X)\|_1=\|\Gamma^{(4)}_{(t-t_3)/(t_4-t_3)} (\lambda_A \rho_A+\lambda_B \rho_B)\|_1= |\lambda_A|  \|\Gamma^{(4)}_{(t-t_3)/(t_4-t_3)}(\rho_A-\lambda\rho_B)\|_1
\end{equation}
with $\lambda:=-\lambda_B/\lambda_A$.  Moreover,  since $\tau=(t-t_3)/(t_4-t_3)$  increases monotonically with $t$, we can focus on proving  
\begin{equation}
\frac{d}{d\tau }\|\Gamma_\tau^{(4)}(\rho_A-\lambda\rho_B)\|_1\leq 0.
\end{equation}

Now, for $\lambda\leq 0$ the trace norm remains trivially constant  as $\Gamma_\tau^{(4)}$ is completely positive and trace preserving on $\Im[\mathcal{E}_3\mathcal{E}_2\mathcal{E}_1]$. Namely, $\rho_A-\lambda\rho_B$ is a positive operator which is mapped to another positive operator with the same trace.  For $\lambda\geq0$,  a straightforward calculation yields
\begin{equation}
\|\Gamma_\tau^{(4)}(\rho_A-\lambda\rho_B)\|_1=\frac{1}{2} \left[\left(1-\tau^2\right)\left| (\lambda -1)\right| +\left(1+\tau^2\right) \sqrt{1+\lambda ^2+2 \lambda  \cos (2 \theta  \tau)}\right].
\end{equation}
Differentiation with respect to $\tau$ gives
\begin{equation}\label{dGamma4}
\frac{d}{d\tau }\|\Gamma_\tau^{(4)}(\rho_A-\lambda\rho_B)\|_1=\tau \left[-|\lambda-1| +\sqrt{1+\lambda ^2+2 \lambda  \cos (2\theta \tau)}\right]-\frac{\lambda \theta \left(1+\tau^2\right) \sin (2\theta \tau)}{\sqrt{1+\lambda ^2+2 \lambda  \cos (2\theta \tau)}}.
\end{equation}

Consider first the case $\lambda\geq1$ where
\begin{equation}\label{dGamma5}
\frac{d}{d\tau }\|\Gamma_\tau^{(4)}(\rho_A-\lambda\rho_B)\|_1=\tau \left[1-\lambda+\sqrt{1+\lambda ^2+2 \lambda  \cos (2\theta \tau)}\right]-\frac{\lambda \theta \left(1+\tau^2\right) \sin (2\theta \tau)}{\sqrt{1+\lambda ^2+2 \lambda  \cos (2\theta \tau)}}.
\end{equation}
For a fixed value of $\theta\tau$, the $\lambda$-derivative of the term between square brackets of \eqref{dGamma5} is
\[
 -1+\frac{\lambda+\cos (2\theta \tau)}{\sqrt{1+\lambda ^2+2 \lambda  \cos (2\theta \tau)}}=-1+\sqrt{\frac{[\lambda+\cos (2\theta \tau)]^2}{1+\lambda ^2+2 \lambda  \cos (2\theta \tau)}}=-1+\sqrt{\frac{\cos^2 (2\theta \tau)+\lambda^2+2\lambda\cos (2\theta \tau)}{1+\lambda ^2+2 \lambda  \cos (2\theta \tau)}}\leq0
\]
and the function is monotonically decreasing with $\lambda$.  As a result, the first term of \eqref{dGamma5} has a maximum for $\lambda=1$. On the other hand, using also that
\begin{equation}
\frac{\lambda}{{\sqrt{1+\lambda ^2+2 \lambda  \cos (2\theta \tau)}}}\geq\frac{\lambda}{1+\lambda}\geq \frac12,
\end{equation}
the equation \eqref{dGamma5} can be upper bounded by
\begin{equation}\label{bound_dGamma}
\frac{d}{d\tau }\|\Gamma_\tau^{(4)}(\rho_A-\lambda\rho_B)\|_1\leq \tau\sqrt{2+2\cos(2\theta\tau)}-(1+\tau^2)\frac{\theta}{2}\sin(2\theta\tau).
\end{equation}
Now, for any angle $\alpha\in[0,\pi]$, $\sqrt{1+\cos\alpha}=\sqrt{2}\cos(\frac{\alpha}{2})$, and then the double-angle formula for the sine allows us to rewrite equation \eqref{bound_dGamma} for $\theta\in[0,\pi/2]$ as
\begin{equation}\label{bound_dGamma2}
\frac{d}{d\tau }\|\Gamma_\tau^{(4)}(\rho_A-\lambda\rho_B)\|_1\leq \cos(\theta\tau) [2\tau -(1+\tau^2)\theta\sin(\theta\tau)].
\end{equation}
Since $\cos(\theta\tau)\geq0$ for $\theta\in[0,\pi/2]$, we shall focus on the term between square brackets of \eqref{bound_dGamma2}. The sine of any angle is larger than its 3rd order Taylor expansion, $\sin \alpha\geq \alpha -\tfrac16 \alpha^3 $, so
\begin{equation}\label{polynomial}
[2\tau -(1+\tau^2)\theta\sin(\theta\tau)]\leq 2\tau -(1+\tau^2)\theta(\theta\tau-\tfrac16\theta^3\tau^3)=(2-\theta^2)\tau-(\theta^2-\tfrac16\theta^4)\tau^3+\tfrac16\theta^4\tau^5\leq (2-\theta^2)\tau-(\theta^2-\tfrac13\theta^4)\tau^3,
\end{equation}
where we have used that $\tau^3\geq \tau^5$ for $\tau\in[0,1]$ in the last step.  It is easily found that this polynomial only has a real root at $\tau=0$ for $\theta\geq\sqrt{2}$.  Since for small $\tau$ this condition clearly makes negative this polynomial,  the same sign remains for any $\tau>0$ as long as $\theta\in[\sqrt{2},\frac{\pi}2]$.

In the case that  $0<\lambda<1$,  Eq.  \eqref{dGamma4} becomes
\begin{align}\label{dGamma6}
\frac{d}{d\tau }\|\Gamma_\tau^{(4)}(\rho_A-\lambda\rho_B)\|_1=\tau \left[\lambda-1 +\sqrt{1+\lambda ^2+2 \lambda  \cos (2\theta \tau)}\right]-\frac{\lambda \theta \left(1+\tau^2\right) \sin (2\theta \tau)}{\sqrt{1+\lambda ^2+2 \lambda  \cos (2\theta \tau)}}\nonumber \\
=\lambda\left\{\tau \left[1-\frac{1}{\lambda} +\sqrt{\frac{1}{\lambda^2}+1+\frac{2}{ \lambda}  \cos (2\theta \tau)}\right]-\frac{\theta \left(1+\tau^2\right) \sin (2\theta \tau)}{\lambda \sqrt{\frac{1}{\lambda^2}+1+\frac{2}{ \lambda}  \cos (2\theta \tau)}}\right\}.
\end{align}
The term between curly brackets is the same as the right hand side of \eqref{dGamma5} under the change $\lambda\to1/\lambda$.  Since we have proven that the latter is negative for $\lambda\geq1$ if $\theta\in[\sqrt{2},\frac{\pi}2]$,  the former is also negative for $0<\lambda<1$.  This proves the negativity of \eqref{dGamma6} for $\theta\in[\sqrt{2},\frac{\pi}2]$.  
\end{itemize}

The interval $\theta\in[\sqrt{2},\frac{\pi}2]$ is in fact tight, as for $\theta>\pi/2$ Eq. \eqref{dGamma4} becomes positive for $\tau=1$,  and for $\theta<\sqrt{2}$ it becomes positive for $\tau$ close to 0.
\end{proof}

It is worth mentioning that this counterexample can be slightly modified to ensure that the dynamical map $\{\Lambda_t\}_{t\in[0,t_4]}$ has continuous time-derivative. To this end, it is enough to take e.g. $\gamma(s)=(1-s)^{-1}$ and $f_{1}(\tau)=f_{2}(\tau)=\tau^2/(1-\tau)$ to obtain zero time-derivative of $\{\Lambda_t\}_{t\in[0,t_4]}$ in $t\in\{t_1,t_2,t_3\}$ (in the rest of the points continuity of the derivative trivially follows). In order to ensure a zero time-derivative in $t_3$ also from the right direction, we can simply take $\Gamma_{\tau}^{(4)}$ as in \eqref{Gamma4} but with the change $\tau\to \tau^\delta$ with $\delta>1$ in the argument of the exponential term. Following the same steps as in the proof of Proposition \ref{MainProposition}, we see that this change induces a $\delta-$deformation of Eq. \eqref{polynomial}. Thus, this remains negative for $\delta$ sufficiently close to (but larger than) 1 provided that $\theta$ is sufficiently far from the left end of $[\sqrt{2},\pi/2]$.

\section{Conclusions}
We have constructed a non-P-divisible dynamical map which presents unidirectional information flow as measured by trace norm contractivity. This counterexample closes an open problem regarding the extension of equivalence theorems between two of the most studied Markovianity conditions to noninvertible dynamical maps. Particularly, it prevents from a extension of Theorem \ref{TheoremCRS} for completely arbitrary dynamics. The result is also useful to illustrate the intricate structure of noninvertible dynamical maps. Nevertheless, there are at least two immediate questions which remains unsolved. On the one hand, one can wonder whether or not the image nonincreasing dynamical maps (or their unitarily equivalent) considered in Theorem \ref{TheoremCRS} and Corollary \ref{CorollaryCRS} form the larger class of dynamical maps which allow for the equivalence between both Markovinity notions. On the other hand, it is not known whether a counterexample to the result of Theorem \ref{TheoremCC} for a two-dimensional Hilbert space can be obtained in larger dimensional spaces.

\section*{Acknowledgements}
This work has been supported by the Spanish MINECO grants MINECO/FEDER
Project PGC2018-099169-B-I00 FIS-2018 and from CAM/FEDER Project No. S2018/TCS-4342
(QUITEMAD-CM). The author is grateful to Prof. Dariusz Chru\'sci\'nski for illuminating discussions on this topic, and to Prof. Andrzej Kossakowski, to whose memory this work is dedicated. His humility, extreme kindness and perpetual good humor were even greater achievements than his extraordinary scientific legacy.


\begin{thebibliography}{1} \bibliographystyle{plain}

\bibitem{RHP_rev} A. Rivas, S. F. Huelga, and M. B. Plenio, Rep. Prog. Phys. {\bf 77}, 094001 (2014).

\bibitem{Breuer_rev} H.-P. Breuer, E.-M. Laine, J. Piilo, and B. Vacchini, Rev. Mod. Phys. {\bf 88}, 021002 (2016).

\bibitem{Ines_rev}  I. de Vega and D. Alonso, Rev. Mod. Phys. {\bf 89}, 015001 (2017).

\bibitem{Michael_rev} L. Li, M. J. W. Hall, and H. M. Wiseman, Phys. Rep. {\bf 759}, 1 (2018).

\bibitem{MM} S. Milz and K. Modi, PRX Quantum {\bf 2}, 030201 (2021).

\bibitem{Accardi} L. Accardi, A. Frigerio, and J. T. Lewis, Publ. Res. Inst. Math. Sci. {\bf 18}, 97 (1982).

\bibitem{Lewis} J. T. Lewis, Phys. Rep. {\bf 77}, 339 (1981).

\bibitem{BLP} H.-P. Breuer, E.-M. Laine, and J. Piilo, Phys. Rev. Lett. {\bf 103}, 210401 (2009).

\bibitem{RHP} A. Rivas, S. F. Huelga, and M.B. Plenio, Phys. Rev. Lett. {\bf 105}, 050403 (2010).

\bibitem{CKR} D. Chru\'sci\'nski, A. Kossakowski, and A. Rivas,  Phys. Rev. A {\bf 83} 052128 (2011).

\bibitem{Acin} B. Bylicka, M. Johansson, and A. Ac\'in, Phys. Rev. Lett. {\bf 118}, 120501 (2017).

\bibitem{CRS} D. Chru\'sci\'nski, A. Rivas, and E. St{\o}rmer, Phys. Rev. Lett. {\bf 121}, 080407 (2018). 

\bibitem{CC} S. Chakraborty and D. Chru\'sci\'nski, Phys. Rev. A {\bf 99}, 042105 (2019).

\bibitem{Erika} E. Andersson, J. D. Cresser, and M. J. W. Hall,  J. Mod. Opt. {\bf 54}, 1695 (2007).

\bibitem{Jyrki} A. S. Hegde, K. P. Athulya, V. Pathak, J. Piilo, and A. Shaji, Phys. Rev. A {\bf 104}, 062403 (2021).

\bibitem{Si22} K. Siudzińska, J. Phys. A: Math. Theor. {\bf 55}, 215201 (2022).

\bibitem{Matsumoto} K. Matsumoto, arXiv:1409.5658.

\bibitem{Alberti-Uhlmann} P. M. Alberti and A. Uhlmann, Rep. Mat. Phys. {\bf 18}, 163 (1980).

\bibitem{Extending} T. Heinosaari, M. A. Jivulescu, D. Reeb, and M. M. Wolf, J. Math. Phys. {\bf 53}, 102208 (2012).

\bibitem{BuscemiDatta} F. Buscemi and N. Datta, Phys. Rev. A {\bf 93}, 012101 (2016).

\bibitem{Helstrom} C. W. Helstrom, {\it Quantum Detection and Estimation Theory} (Academic Press, New York, 1976).

\bibitem{Hayashi} M. Hayashi, {\it Quantum Information: An Introduction} (Springer-Verlag, Berlin, 2006).

\bibitem{FabioDarek} F. Benatti, D. Chru\'sci\'nski, and S. Filippov, Phys. Rev. A {\bf 95}, 012112 (2017).

\bibitem{DarekSabrina} D. Chru\'sci\'nski and S. Maniscalco, Phys. Rev. Lett. {\bf 112}, 120404 (2014).

\bibitem{GKLS} V. Gorini, A. Kossakowski, and E. C. G. Sudarshan, J. Math. Phys. (N.Y.) {\bf 17}, 821 (1976); G. Lindblad, Commun. Math.
Phys. {\bf 48}, 119 (1976).


\end{thebibliography}
\end{document}